\newenvironment{proofof}[1][Proof of]{\par
  \pushQED{\qed}%
  \normalfont 
  \trivlist
  \item[\hskip\labelsep
        \color{darkgray}\sffamily\bfseries
    #1]\ignorespaces
}{%
  \popQED\endtrivlist
}
\newcommand{\red}{black}
\newcommand{\purple}{black}
\newcommand{\pathf}[1]{\ensuremath{\mathcal{#1}}}
\newcommand{\pathR}{{\color{\purple} \ensuremath{\pathf{R}}}}
\newcommand{\pathP}{{\color{\red} \ensuremath{\mathcal{P}}}}
\newcommand{\subP}[1]{\mbox{\textless}#1\mbox{\textgreater}}
\newcommand{\arcP}[2]{{\color{\red} \ensuremath{\pathf{A}_{#1}\subP{#2}}}}
\newcommand{\dR}[1]{{\color{\purple} \ensuremath{\pathR\subP{#1}}}}
\newcommand{\dP}[1]{{\color{\red}\ensuremath{\pathP_{#1}}}}
\newcommand{\Cpi}[1]{{\color{\red}\ensuremath{C'_{#1}}}}
\newcommand{\snailP}[1]{\ensuremath{\pathf{S}_{#1}}}
\title{Upper and Lower Bounds for Competitive Online Routing on Delaunay Triangulations\footnote{N. Bonichon partially supported by ANR grant JCJC EGOS ANR-12-JS02-002-01 and LIRCO. P. Bose partially supported by NSERC. A. van Renssen supported by JST, ERATO, Kawarabayashi Large Graph Project.}}
\author[1]{Nicolas Bonichon}
\author[2]{Prosenjit Bose}
\author[2]{Jean-Lou De Carufel}
\author[3]{Ljubomir Perkovi\'{c}}
\author[4]{Andr\'e van Renssen}
\affil[1]{LaBRI, Bordeaux University, 
  Bordeaux, France\\
  \texttt{bonichon@labri.fr}}
\affil[2]{School of Computer Science, Carleton University, 
 Ottawa, Canada\\
   \texttt{jit@scs.carleton.ca, jdecaruf@cg.scs.carleton.ca}}
\affil[3]{School of Computing, DePaul University,
  Chicago, USA\\
  \texttt{lperkovic@cs.depaul.edu}}
\affil[4]{National Institute of Informatics,
 Tokyo, Japan \\
  \texttt{andre@nii.ac.jp}}
\authorrunning{N. Bonichon, P. Bose, J.-L. De Carufel, L. Perkovi\'{c}, A. van Renssen}
\subjclass{I.3.5 Computational Geometry and Object Modeling, G.2.2 Graph Theory}
\keywords{Delaunay triangulation, online routing}
\begin{document}

\maketitle

\begin{abstract}
Consider a weighted graph $G$ where vertices are points in the plane and edges are line segments. The weight of each edge is the Euclidean distance between its two endpoints. A routing algorithm on $G$ has a \emph{competitive ratio} of $c$ if the length of the path produced by the algorithm from any vertex $s$ to any vertex $t$ is at most $c$ times the length of the shortest path from $s$ to $t$ in $G$. If the length of the path is at most $c$ times the Euclidean distance from $s$ to $t$, we say that the routing algorithm on $G$ has a \emph{routing ratio} of $c$.

We present an online routing algorithm on the Delaunay triangulation with competitive and routing ratios of $5.90$. This improves upon the best known algorithm that has competitive and routing ratio $15.48$. The algorithm is a generalization of the deterministic $1$-local routing algorithm by Chew on the $L_1$-Delaunay triangulation. When a message follows the routing path produced by our algorithm, its header need only contain the coordinates of $s$ and $t$. This is an improvement over the currently known competitive routing algorithms on the Delaunay triangulation, for which the header of a message must additionally contain partial sums of distances along the routing path.

We also show that the routing ratio of any deterministic $k$-local algorithm is at least $1.70$ for the Delaunay triangulation and $2.70$ for the $L_1$-Delaunay triangulation. In the case of the $L_1$-Delaunay triangulation, this implies that even though there exists a path between two points $x$ and $y$ whose length is at most $2.61|[xy]|$
(where $|[xy]|$ denotes the length of the line segment $[xy]$), it is not
always possible to route a message along a path of length less than $2.70|[xy]|$. From these bounds on the routing ratio, we derive lower bounds on the competitive ratio of $1.23$ for Delaunay triangulations and $1.12$ for $L_1$-Delaunay triangulations.
\end{abstract}


\section{Introduction}

Navigation in networks or in graphs is fundamental in computer science.
It leads to applications in a number of fields, namely geographic information
systems, urban planning, robotics, and communication networks to name only a
few. Navigation often occurs in a geometric setting that can be modeled using
a {\em geometric graph} which is defined as a weighted graph $G$ whose
vertices are points in the plane and edges are line segments.
Let the weight of each edge be the Euclidean distance between its two endpoints.
Navigation is then simply the problem of finding a path in $G$ from a source
vertex $s$ to a target vertex $t$.
When complete information about the graph is available,
numerous path finding algorithms exist for weighted graphs
(e.g., Dijkstra's algorithm~\cite{dijkstra1959note}).

The problem is more challenging when only \emph{local} information is provided.
To illustrate this, suppose that a message is traveling along edges of $G$.
We are interested in a routing algorithm that makes forwarding decisions based on limited information related to the current position of the message in $G$.
If the message is currently at vertex $v$, such information
could be limited to the coordinates of $v$ and its neighbors in the graph.
Such an algorithm is called {\em local} or {1-local}. More
generally,  when the coordinates of neighbours that are at most $k$ hops away
from $v$ are available, we say that the algorithm is \emph{$k$-local}.
A routing algorithm that uses only local information is called an \emph{online} routing algorithm.
Given a constant $c \geq 1$, an online routing algorithm has a \emph{competitive ratio} of $c$ or is $c$-competitive if the length of the path produced by the algorithm from any vertex $s$ to any vertex $t$ is at most $c$ times the length of the shortest path from $s$ to $t$ in $G$.
If the length of the path is at most $c|[st]|$,
where $|[st]|$ is the Euclidean length of the line segment $[st]$,
we say that the routing algorithm has a \emph{routing ratio} of $c$. 
Since $|[st]|$ is a lower bound on the length of the shortest path from $s$ to $t$ in $G$,
the routing ratio is an upper bound on the competitive ratio.

Competitive online routing is not always possible
(refer to~\cite{DBLP:journals/ijcga/BoseBCDFLMM02,DBLP:journals/winet/DurocherKN10} for instance) and even when it is, a small competitive ratio may not
be. In this paper, we are interested in competitive online routing
algorithms for  classes of geometric graphs that have ``good paths''.
A graph $G$ is a \emph{$\kappa$-spanner}
(or has a \emph{spanning ratio} of $\kappa$)
when for any pair of vertices $u$ and $v$ in $G$,
there exists a path in $G$ from $u$ to $v$
with length at most $\kappa|[uv]|$
(see~\cite{DBLP:conf/wads/BarbaBCRV13,
DBLP:conf/compgeom/BarbaBDFKORTVX14,
DBLP:journals/jocg/BoseCCS10,
DBLP:conf/cccg/BoseCMRV12,
DBLP:journals/ijcga/BoseDDOSSW12,
DBLP:journals/comgeo/BoseMRV15,
DBLP:conf/wads/BoseRV13,
Xia13}
for instance).
In several cases,
the proof of existence of these paths rely on full knowledge of the graph.
Therefore,
a natural question is to ask whether we can construct these paths
using only local information.
We do not have a general answer to that question.
Nevertheless,
there exist $c$-competitive online routing algorithms for several of these classes of geometric graphs
(see~\cite{
DBLP:journals/comgeo/BoseCCSX11,
BDDT14,
BFRV2012Routing,
DBLP:conf/wads/BoseRV13,
Che86,
Che89}).
In this paper, we focus on the most important geometric graph, the Delaunay
triangulation.

A Delaunay triangulation is a geometric graph $G$ such that there is an edge
between two vertices $u$ and $v$ if and only if there exists a {\em circle}
with $u$ and $v$ on its boundary that contains no other vertex of $G$.
Dobkin et al.~\cite{DFS90} were the first to prove that the Delaunay 
triangulation is a spanner.
Xia~\cite{Xia13} proved that the spanning ratio of the Delaunay triangulation 
is at most $1.998$ which currently best known the best upper bound on the
spanning ratio. The best lower bound, by Xia et al.~\cite{xia2011toward}, is
$1.593$. If {\em circle} is replaced with {\em equilateral triangle}
in the definition of the Delaunay triangulation, then a different
triangulation is defined: the $TD$-Delaunay triangulation. Chew proved that
the $TD$-Delaunay triangulation~\cite{Che89} is a $2$-spanner
and that the constant $2$ is tight. If we replace {\em circle} with
{\em square} then yet another triangulation is defined: the $L_1$- or the
$L_{\infty}$-Delaunay triangulation, depending on the orientation of the square.
Bonichon et al.~\cite{BGHP12} proved that the $L_1$- and the 
$L_{\infty}$-Delaunay triangulations are $\sqrt{4+2\sqrt{2}}$-spanners and 
that the constant is also tight. 

Ideally, we would like the routing ratio to be identical to the spanning ratio.
In the case of $TD$-Delaunay triangulations, Bose et al.~\cite{BFRV2012Routing}
found an online routing algorithm that has competitive and routing ratios of
$\frac{5}{\sqrt{3}}$. They also showed lower bounds of
$\frac{5}{\sqrt{3}}$ on the routing ratio and of $\frac{5}{3}$ on the
competitive ratio. In his seminal paper, Chew~\cite{Che89} described an
online routing algorithm on the $L_1$-Delaunay triangulation that has
competitive and routing ratios of $\sqrt{10}\approx 3.162$.
We show in this paper that there is separation between the spanning ratio and
the routing ratio in the case of the $L_1$ and $L_{\infty}$-Delaunay
triangulations. We show lower bounds of $2.707$ on the routing ratio
and of $1.122$ on the competitive ratio (Theorem~\ref{thm:L1Lower}). 
In this paper, we also present an online routing algorithm on the
Delaunay triangulation that has competitive and routing ratios of $5.90$ 
(Theorem~\ref{thm:main}).
This improves upon the previous best known algorithm that has competitive and
routing ratios of $15.48$~\cite{BDDT14}.
Our algorithm is a generalization of the deterministic $1$-local routing 
algorithm by Chew on the $L_1$-Delaunay triangulation~\cite{Che86} and the 
$TD$-Delaunay triangulation~\cite{Che89}.
Although the generalization of Chew's routing algorithm to Delaunay
triangulation is natural, the analysis of its routing ratio is non-trivial
and relies on new techniques. An advantage of Chew's routing algorithm
is that it does not require the message header to contain any information other
than the coordinates of $s$ and $t$. All previously known competitive routing
algorithms on the Delaunay triangulation \cite{BDDT14,DBLP:journals/siamcomp/BoseM04} require header to store partial sums of distances along the routing
path. in the header of the message. See Table~\ref{ta:related} for a summary
of these results.

\begin{table}[!thb]
\begin{center}
\begin{tabular}{|l|c|c|c|}
\hline {\bf Shape} &  {\bf triangle} & {\bf square} & {\bf circle} \\ \hline \hline
spanning ratio UB &  2~\cite{Che89} & 2.61~\cite{BGHP12} & $1.998$~\cite{Xia13}  \\ \hline
spanning ratio LB &  2~\cite{Che89} & 2.61~\cite{BGHP12} & $1.593$~\cite{xia2011toward}  \\ \hline
routing ratio UB &$5/\sqrt{3} \approx 2.89$~\cite{BFRV2012Routing}  & $\sqrt{10} \approx 3.16$~\cite{Che86} & $1.185 + 3\pi/2 \approx 5.90$ (Thm~\ref{thm:main})\\ \hline
routing ratio LB &  $5/\sqrt{3} \approx 2.89$~\cite{BFRV2012Routing} & 2.707 (Thm~\ref{thm:L1Lower}) & 1.701 (Thm~\ref{thm:L2Lower})\\ \hline
competitiveness LB & $5/3 \approx 1.66$~\cite{BFRV2012Routing}  & 1.1213 (Thm~\ref{thm:L1Lower}) & 1.2327 (Thm~\ref{thm:L2Lower})\\ \hline
\end{tabular}
\end{center}
\caption{Upper and lower bounds on the spanning ratio and the routing ratio on Delaunay triangulations defined by different empty shapes. We also provide lower bounds on the competitiveness of $k$-local deterministic routing algorithms on Delaunay triangulations.
\label{ta:related}}
\end{table}

\section{Chew's Routing Algorithm}
\label{sec:chew}
In this section we present the routing algorithm. This algorithm is a
natural adaptation to Delaunay triangulations of Chew's routing algorithm
originally designed for $L_1$-Delaunay
triangulations~\cite{Che86} and subsequently adapted for $TD$-Delaunay
triangulations~\cite{Che89}.

We consider the Delaunay triangulation defined on a finite set of points $P$
in the plane.
In this paper,
we denote the source of the routing path by $s\in P$
and its destination by $t\in P$.
We assume that an orthogonal coordinate system consisting of
a horizontal $x$-axis and a vertical $y$-axis exists and we denote by $x(p)$
and $y(p)$ the $x$- and $y$-coordinates of any point $p$ in the plane. 
We denote the line supported by two points $p$ and $q$ by $pq$,
and the line segment with endpoints $p$ and $q$ by $[pq]$.
Without
loss of generality, we assume that $y(s) = y(t) = 0$ and $x(s) < x(t)$.

When routing from $s$ to $t$, we consider only 
(the vertices and edges of) the triangles of the Delaunay
triangulation that intersect $[st]$.
Without loss of generality,
if a vertex (other than $s$ and $t$) is on $[st]$,
we consider it to be slightly above $st$.
Therefore,
the triangles that intersect $[st]$ can be ordered from left to right.
Notice that all vertices (other than $s$ and $t$)
from this ordered set of triangles belong to at least $2$ of these triangles.

The routing algorithm can be described as follows. When we
reach a vertex $p_i$ (initially $p_0 = s$), we consider the
{\em rightmost} triangle $T_i$ that has $p_i$ as a vertex.
Let $x$ and $y$ be the other two vertices of $T_i$ and denote by $C_i$ the
the circle circumscribing $T_i$. 
Let $w_i$ ($w$ as in \emph{west}) be the
leftmost point of $C_i$ and let $r_i$ be the rightmost intersection of
$C_i$ with $[st]$. The line segment $[w_ir_i]$ splits $C_i$ in two arcs: the
{\em upper} one, defined by the clockwise walk along $C_i$ from $w_i$ to $r_i$
and the {\em lower} one, defined by the counterclockwise walk along $C_i$ from
$w_i$ to $r_i$. Both arcs include points $w_i$ and $r_i$. Because $T_i$ is
rightmost, $x$ and $y$ cannot both lie on the interior of the same arc
so we can assume that $x$ belongs to the upper arc and $y$ belongs to the
lower arc. The forwarding decision at $p_i$ is made as follows:
\begin{itemize}
\item If $p_i$ belongs to the upper arc, we walk clockwise
along $C_i$ until we reach vertex $x$. 
\item If $p_i$ belongs to the lower arc, we walk counterclockwise
along $C_i$ until we reach $y$. 
\end{itemize}
If $p_i = w_i$ we apply the first (upper arc) rule.

Once we reach $p_{i+1} = x$ or $y$, we repeat the process until we reach $t$.
Note that because the two vertices of $T_i$ other than $p_{i+1}$ are not both
below or both above line segment $[st]$, $T_i$ must be the leftmost triangle
that has $p_{i+1}$ as a vertex. Unless $p_{i+1} = t$, $p_{i+1}$ is a vertex of at
least another triangle intersecting $[st]$, so $T_i$ cannot be the rightmost
triangle that has $p_{i+1}$ as a vertex.

Figure~\ref{fig:exampleRouting} shows an example of a route computed by this
algorithm.

\begin{figure}[htb!]
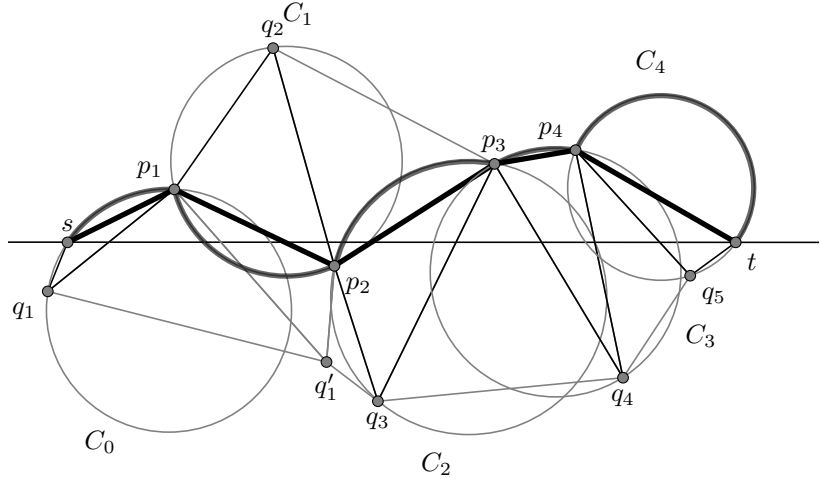

\center{\figexampleRouting}
\caption{Illustration of Chew's routing algorithm. The empty circles of the
\emph{rightmost triangles} are drawn in gray and their edges are drawn in black.
The edges of the obtained path and the associated arcs are thicker.}
\label{fig:exampleRouting}
\end{figure}

Because the routing decision can always be applied, because the decision
is based on the rightmost triangle and progress is made from left to right, 
and because $P$ is finite, we can conclude that the following results by Chew
from~\cite{Che86} extend to Delaunay
triangulations.  The following is Lemma~2 in~\cite{Che86}:
\begin{lemma}
\label{lem:Chew2}
The triangles used ($T_0,T_1\dots,T_k$) are ordered along
$[st]$. Although not all Delaunay triangulation triangles intersecting $[st]$
are used, those used appear in their order along $[st]$.
\end{lemma}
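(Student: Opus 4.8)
The plan is to reduce the statement to a single local claim: that each used triangle $T_i$ comes strictly before the next one $T_{i+1}$ in the left-to-right order of the triangles crossing $[st]$. First I would make that order precise. The Delaunay triangles meeting $[st]$ form a sequence $\tau_1,\dots,\tau_m$ in which consecutive triangles share an edge that crosses $[st]$, and $[st]$ traverses them from left to right; the convention that a vertex on $[st]$ is regarded as lying slightly above $[st]$ ensures $[st]$ meets no vertex, so this order is well defined and total. For a vertex $v$, let $S_v$ be the set of these triangles having $v$ as a vertex. The triangles of the sequence incident to a fixed $v$ form a contiguous block (a consequence of planarity: the faces of the star of $v$ that $[st]$ crosses are angularly consecutive about $v$), so $S_v$ has a well-defined leftmost and rightmost member. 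Once I show $T_i \prec T_{i+1}$ for every $i$, transitivity of the order yields $T_0\prec T_1\prec\cdots\prec T_k$, which is exactly the assertion that the used triangles appear in their order along $[st]$.

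For the local claim I would use the two observations recorded just before the lemma. On the one hand, $p_{i+1}$ is a vertex of $T_i$, so $T_i\in S_{p_{i+1}}$; since $T_{i+1}$ is chosen as the rightmost member of $S_{p_{i+1}}$, we get $T_i\preceq T_{i+1}$ for free. On the other hand, by those observations $T_i$ is in fact the leftmost member of $S_{p_{i+1}}$, and whenever $p_{i+1}\neq t$ the vertex $p_{i+1}$ lies on at least one further triangle crossing $[st]$, so $S_{p_{i+1}}$ has at least two members and its leftmost and rightmost members are distinct. Hence $T_i=\min S_{p_{i+1}}\prec\max S_{p_{i+1}}=T_{i+1}$. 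Strict monotonicity also shows that no triangle is used twice, and since only finitely many triangles cross $[st]$ the route is finite and terminates at $t$.

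The part I expect to carry the real weight is the geometric claim that $T_i$ is the leftmost member of $S_{p_{i+1}}$. Here is how I would argue it. Write the third vertex of $T_i$ (the one discarded by the arc rule) as $\bar p$, so the vertices of $T_i$ are $p_i,p_{i+1},\bar p$. Because $T_i$ is chosen as the rightmost triangle at $p_i$, the triangle following $T_i$ in the sequence cannot contain $p_i$, so $[st]$ must exit $T_i$ through the edge opposite $p_i$, namely $[p_{i+1}\bar p]$; consequently $[st]$ enters $T_i$ through an edge incident to $p_i$. By the observation that the two vertices of $T_i$ other than $p_{i+1}$ lie on opposite sides of $[st]$, the edge $[p_i\bar p]$ is crossed by $[st]$, and being incident to $p_i$ but different from the exit edge it must be the entering edge. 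Since $[p_i\bar p]$ is the edge of $T_i$ opposite $p_{i+1}$, the triangle preceding $T_i$ does not contain $p_{i+1}$; combined with the contiguity of $S_{p_{i+1}}$ this makes $T_i$ its leftmost member. The two ingredients that make this work are precisely the defining features of the algorithm — that $T_i$ is the rightmost triangle at $p_i$ (fixing the exit edge as the one opposite $p_i$) and that $p_i$ and the discarded vertex $\bar p$ straddle $[st]$ (fixing the entering edge as the one opposite $p_{i+1}$) — so the careful bookkeeping of which edge is entered and which is exited is where the attention has to go.
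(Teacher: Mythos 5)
Your proof is correct and follows essentially the same route as the paper, which gives no standalone proof of this lemma but instead defers to Chew's original argument after recording exactly the two facts you rely on: that $T_i$ is the \emph{leftmost} and $T_{i+1}$ the \emph{rightmost} triangle of the crossing sequence incident to $p_{i+1}$, and that these differ whenever $p_{i+1}\neq t$. Your entry/exit-edge bookkeeping is a welcome elaboration of the ``leftmost'' observation that the paper merely asserts from the fact that the two vertices of $T_i$ other than $p_{i+1}$ straddle $[st]$.
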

In Figure~\ref{fig:exampleRouting} the triangles $T_i$ are drawn with blue
edges. The following corollary is in~\cite{Che86} as well:
\begin{corollary}
The algorithm terminates, producing a path from $s$ to $t$.
\end{corollary}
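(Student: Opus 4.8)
The plan is to establish two things: that the sequence of hops $p_0 = s, p_1, p_2, \dots$ produced by the algorithm is finite, and that its final vertex is $t$. For finiteness I would exhibit a strictly monotone quantity that cannot grow without bound. The natural choice is the position along $[st]$ of the triangle $T_i$ used at step $i$: by Lemma~\ref{lem:Chew2} the triangles $T_0, T_1, \dots$ occur in their left-to-right order along $[st]$, so in particular no triangle is ever reused. Since $P$ is finite, only finitely many Delaunay triangles meet $[st]$, and hence the sequence $T_0, T_1, \dots$ is finite.

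The crux is to confirm that this sequence of triangles strictly advances, so that the walk cannot stall at an intermediate vertex. Here I would lean on the observation recorded just before the statement: because the two vertices of $T_i$ other than $p_{i+1}$ lie on opposite sides of the line $st$, the triangle $T_i$ is the \emph{leftmost} triangle incident to $p_{i+1}$. Consequently, unless $p_{i+1} = t$, the vertex $p_{i+1}$ is incident to at least one further triangle crossing $[st]$, which must lie strictly to the right of $T_i$; thus the rightmost triangle $T_{i+1}$ chosen at $p_{i+1}$ is distinct from $T_i$ and lies to its right. This single fact does double duty: it feeds the monotonicity-plus-finiteness argument, and it pins down the stopping condition, since the only vertex incident to no triangle strictly to its right is $t$.

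To finish, I would observe that each hop $[p_i p_{i+1}]$ is an edge of $T_i$ and therefore an edge of the Delaunay triangulation, the circular arcs in the description being relevant only to the later length analysis. Concatenating $[p_0 p_1], [p_1 p_2], \dots, [p_{k-1} p_k]$ with $p_k = t$ then yields a genuine path from $s$ to $t$. The main obstacle I anticipate is making the strict-advance step airtight: one must rule out both that the rightmost triangle at $p_{i+1}$ coincides with $T_i$ and that it could lie to the left, and this is exactly where the opposite-sides property of the two companion vertices of $p_{i+1}$ combines with Lemma~\ref{lem:Chew2}. Once that step is secured, termination and arrival at $t$ follow at once from the finiteness of $P$.
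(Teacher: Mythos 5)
Your argument is correct and follows essentially the same route as the paper, which does not spell out a proof but justifies the corollary by exactly the ingredients you assemble: the ordering of the used triangles from Lemma~\ref{lem:Chew2}, the observation that $T_i$ is the leftmost triangle at $p_{i+1}$ (so the rightmost one chosen next lies strictly further right unless $p_{i+1}=t$), and the finiteness of $P$. Your write-up is a faithful, slightly more explicit version of the paper's sketch, with no gaps.
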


\section{Routing Ratio}
\label{sec:compet}

In this section,
we prove the main theorem of this paper.
\begin{theorem}
\label{thm:main}
The Chew's routing algorithm on the Delaunay triangulation has a routing ratio of at most $(1.185043874 + 3\pi/2) \approx 5.89743256$.
\end{theorem}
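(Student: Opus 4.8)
The plan is to reduce the length of the produced path to a sum of circular-arc lengths and then bound that sum against $|[st]|$. First I would observe that each edge $[p_ip_{i+1}]$ of the path is a chord of the circumscribing circle $C_i$, and that the algorithm joins $p_i$ to $p_{i+1}$ by walking along one of the two arcs of $C_i$ cut off by that chord. Since a chord is never longer than either of its subtending arcs, the path length is at most $\sum_i |A_i|$, where $A_i$ is the traversed arc of $C_i$; it therefore suffices to show $\sum_i |A_i| \le (1.185043874 + 3\pi/2)\,|[st]|$. The essential structural input here is Lemma~\ref{lem:Chew2}: the circles $C_i$, and hence the arcs $A_i$, occur in left-to-right order along $[st]$, which is what makes a telescoping accounting possible at all.

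Next I would normalize each arc. For step $i$ I replace $C_i$ by the auxiliary circle $C'_i$ that passes through $p_i$ and $p_{i+1}$ and whose center $O'_i$ lies on the horizontal line through $p_i$; let $w'_i$ be its leftmost point and $s_i$ the vertical projection of $w'_i$ onto the supporting line $st$. The first geometric lemma I would establish is that the traversed arc $A_i$ is no longer than a fixed \emph{canonical shape} attached to $st$ at $s_i$ (an arc of $C'_i$ running from $p_i$ to $w'_i$, followed by the vertical segment $[w'_is_i]$, and, where needed, a connecting arc and segment back toward the previous anchor). This rewrites every contribution as an upper bound built from arcs anchored on $st$, and it splits each contribution into a horizontal part that advances toward $t$ and a vertical/angular part that is pure overhead.

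The amortized heart of the argument then compares the canonical pieces of two consecutive steps. Since $C'_{i-1}$ and $C'_i$ share the vertex $p_i$, I would compare the piece leaving $C'_{i-1}$ at $p_i$ with the piece entering $C'_i$ at $p_i$ and show, by a local exchange argument, that the pair can be replaced by an equal-or-shorter configuration whose horizontal extents telescope; iterating this replacement collapses the horizontal contributions to exactly $|[st]|$ (this is the leading \emph{1} in the ratio). Carrying this out requires distinguishing the relative radii and positions of $C'_{i-1}$ and $C'_i$ and whether the relevant arcs lie above or below $st$ --- precisely the several geometric configurations (the ``Case A'', ``Case B'' splits) that must each be checked. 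The leftover arc-angle contributions are then bounded by a total turning-angle budget: the upper and lower arcs rotate the tangent direction in opposite senses, and because $s$ and $t$ both lie on the $x$-axis this total is capped, yielding the $3\pi/2$ term.

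Finally, after telescoping the horizontal progress and charging the angular budget, a single residual worst-case configuration remains; extremizing its length over the one free geometric parameter (essentially the central angle of the extremal canonical arc) is a calculus optimization whose transcendental optimum is the constant $\delta_{\min}=0.185043874$, and adding the three contributions gives the ratio $1+\delta_{\min}+3\pi/2$. I expect the main obstacle to be exactly this combination: the exhaustive exchange-and-case analysis, where one must verify in \emph{every} relative configuration of two consecutive auxiliary circles that the telescoping replacement never increases length, together with isolating and solving the extremal case that pins down $\delta_{\min}$. The normalization lemma and the $3\pi/2$ angular cap should be comparatively routine once the canonical shape is fixed; the delicate quantitative optimization hidden in $\delta_{\min}$ is where the non-triviality genuinely concentrates.
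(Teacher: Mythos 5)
Your overall architecture is genuinely the paper's: bound the edge path by the traversed arcs, replace each circumcircle $C_i$ by a normalized circle $\Cpi{i}$ through $p_i,p_{i+1}$ anchored to $st$, compare consecutive normalized pieces at the shared vertex $p_i$ via a case analysis on the relative configuration, telescope, and extract $\delta=0.185043874$ from a final trigonometric optimization. However, two of your accounting mechanisms do not match what actually makes the proof close, and one of them is an error rather than a stylistic difference. You attribute the $3\pi/2$ to a \emph{global} turning-angle budget (``the total is capped because $s$ and $t$ lie on the $x$-axis''). A global angular cap of $3\pi/2$ radians would yield an additive constant, not the multiplicative term $(3\pi/2)\,|[st]|$ that the theorem requires. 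In the paper the $3\pi/2$ is a \emph{per-unit-horizontal-advance} cost: the comparison object is the snail curve $\snailP{p,q}$, a vertical segment plus a clockwise $3\pi/2$ arc of a circle whose radius equals the horizontal advance $x(q)-x(p)$, so $|\snailP{p,q}|=(1+3\pi/2)(x(q)-x(p))$; the bound $(1+3\pi/2)|[st]|$ comes from $\sum_i|\snailP{s_{i-1},s_i}|=|\snailP{s,t}|$, using the monotonicity $x(s_{i-1})\le x(s_i)$ (Lemma~\ref{lem:siOrder}). The leading ``$1$'' is likewise the vertical segment of the snail curve per unit horizontal progress, not the telescoped horizontal progress itself.

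The second issue is an omission that would stop a faithful execution of your plan: telescoping the horizontal extents of the normalized circles alone is not enough when the edge $[p_{i-1}p_i]$ crosses $[st]$ (the type-$B$ circles). The paper's key inequality (Lemma~\ref{lem:induction}) carries \emph{two} additional potentials absent from your sketch: a vertical potential $|y(t_i)|$, where $t_i$ is the next visited vertex reached by an edge crossing $st$, and the slack term $\delta\,|[t'_{i-1}t'_i]|$ charged against the horizontal progress of the projections $t'_i$. The constant $\delta$ is then determined not by ``extremizing a single residual configuration'' but as the least value making $(1+3\pi/2)(1-\cos\theta)+2\delta\cos\theta-\theta-\sin\theta\ge 0$ hold for all $\theta\in[0,\pi/4]$ in the type-$B$ analysis, and it enters the final bound because $\sum_i|[t'_{i-1}t'_i]|<|[st]|$. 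Without introducing the $t_i$'s and these two potential terms, the exchange argument you describe fails precisely in the crossing case, which is where all of the excess over $1+3\pi/2$ originates.
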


As shown in Figure~\ref{fig:LBChew}, Chew's algorithm has routing ratio at least 5.7282 (see Section~\ref{sec:lower}).

\begin{figure}[htb!]
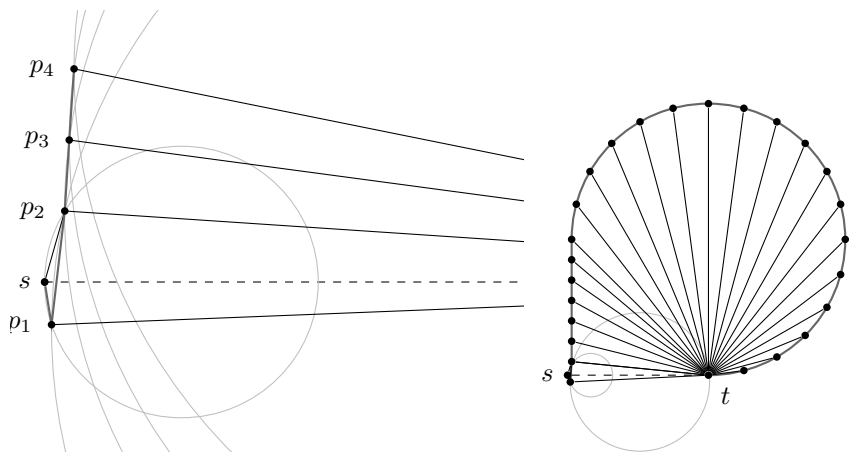

\center{\figLBChew}
\caption{On the right is a Delaunay triangulation that illustrates the lower
bound on the routing ratio of Chew's algorithm. The path obtained by the
algorithm is shown in dark gray; it has length $5.7282|[st]|$, a bit greater
than $(1+3\pi/2)|[st]|$. The left image zooms in on what happens close to
point $s$.}
\label{fig:LBChew}
\end{figure}

We devote this section to the proof of Theorem~\ref{thm:main}.

\subsection{Preliminaries}

We start by introducing additional definitions, notations, and structural
results about Chew's routing algorithm. Some of the notations are illustrated in
Figure~\ref{fig:example}.

We denote by $|[pq]|$ the Euclidean length of the line segment $[pq]$, and 
by $|\pathf{P}|$ the length of a path $\pathf{P}$ in the plane. Given a path
$\pathf{P}$ 
from $p$ to $q$ and a path $\pathf{Q}$ from $q$ to $r$, $\pathf{P}+\pathf{Q}$ 
denotes the concatenation of $\pathf{P}$ and $\pathf{Q}$. We say that the path $\pathf{P}$ from 
$p$ to $q$ is \emph{inside} a path $\pathf{Q}$ that also goes from $p$ to $q$ 
if the path $\pathf{P}$ is inside the region delimited by $\pathf{Q}+[qp]$. 
Note that if $\pathf{P}$ is convex and inside $\pathf{Q}$ then 
$|\pathf{P}| \leq |\pathf{Q}|$.
Given a path $\pathf{P}$ and two points $p$ and $q$ on $\pathf{P}$, we denote by
$\pathf{P}\subP{p,q}$ the sub-path of $\pathf{P}$ that goes from $p$ to $q$.

Let $s=p_0,p_1, \dots, p_k=t$ be the sequence of vertices visited by Chew's
routing algorithm. If some $p_i$ other than $s$ or $t$ lies on the
segment $[st]$, we can separately analyze the routing ratio of the
paths from $s$ to $p_i$ and from $p_i$ to $t$. We assume, therefore, that
no $p_i$, other than $s=p_0$ and $t=p_k$, lies on segment $[st]$.

For every edge $(p_i,p_{i+1})$, there is a corresponding 
oriented arc of $C_i$ used by the algorithm which we refer to as
$\dR{p_i,p_{i+1}}$ (shown in gray in Figures~\ref{fig:exampleRouting}
and~\ref{fig:example}). The orientation (clockwise or counterclockwise) of $\dR{p_i,p_{i+1}}$ is
the orientation taken by the routing algorithm when going from $p_i$ to
$p_{i+1}$. Let $\pathR$ be the union of these arcs. We call $\pathR$
the routing path from $s$ to $t$.
The length of the path $s=p_0,p_1, \dots,p_{k-1},p_k=t$ along the edges of the
Delaunay triangulation is smaller than the length of $\pathR$.
In the remainder of this section, we analyze the length of $\pathR$ 
to obtain an upper bound on the length of the path along the edges of the
Delaunay triangulation.

\subsection{Worst Case Circles $\Cpi{i}$}

In order to bound the length of $\pathR$, we work
with the circles $\Cpi{i}$ defined as follows. Let $A_1$ and $A_2$ be two circles
that go through $p_i$ and $p_{i+1}$ such that $A_1$ is tangent to $[st]$ 
and the tangent of $A_2$ at $p_i$ is vertical.
We define $\Cpi{i}$ to be $A_2$ if $A_2$ intersects $[st]$ twice and $A_1$
otherwise. 
Let $w'_i$ be the leftmost vertex of $\Cpi{i}$ and $O'_i$ the center of
$\Cpi{i}$. In the example of Figure~\ref{fig:example}, $w'_1 = p_1$ and
$w'_3 \neq p_3$.

Note that if $[p_{i}p_{i+1}]$ crosses
$[st]$, then
$\Cpi{i}$ is $A_2$. We consider three
types of circles $\Cpi{i}$:
\begin{itemize}
\item Type $A_1$: $p_i \neq w'_i$ and $[p_{i-1}p_i]$ does not cross $[st]$.
\item Type $A_2$: $p_{i}=w'_i$ and $[p_{i-1}p_i]$ does not cross $[st]$.
\item Type $B$: $[p_{i-1}p_i]$ crosses $[st]$.
\end{itemize}
In Figure~\ref{fig:example},  $\Cpi{0}, \Cpi{1} \dots \Cpi{4}$
are respectively of type $A_2$, $B$, $B$, $A_1$, $A_2$. Note that if $\Cpi{i}$
is of type $B$, then $p_i = w'_i$. We use the expression ``type $A$'' instead of ``type $A_1$ or $A_2$''.

Given two points $p,q$ on $\Cpi{i}$, let $\arcP{i}{p,q}$ be
the arc on $\Cpi{i}$ from $p$ to $q$ whose orientation (clockwise or
counterclockwise) is the same as the orientation of $\dR{p_i,p_{i+1}}$ around
$C_i$.
Notice that $|\dR{p_i,p_{i+1}}| \leq |\arcP{i}{p_i,p_{i+1}}|$.

Let $t_i$ be the first point $p_j$ after $p_i$ such that $[p_ip_j]$
intersects $st$. Notice that $t_{k-1} = t$. We also set $t_k=t$.
In Figure~\ref{fig:example}, 
$t_0=p_1$, $t_1=p_2$, $t_2 = p_3$ and $t_3=t_4=t_5=t$.
In addition,
let $t'_i=(x(t_i),0)$ and $s_i = (x(w'_i), 0)$.

\begin{lemma}
\label{lem:siOrder}
For all $0 < i \leq k$:
\begin{align}
x(s_{i-1}) \leq x(s_{i}),\label{eq:siOrder}\\
x(s_{i}) \leq x(t_{i-1}) \leq x(t_{i}).\label{eq:tiOrder}
\end{align}
\end{lemma}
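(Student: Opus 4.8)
The plan is to reduce all three inequalities to two structural facts and then dispatch the cases. The first fact is the ``leftmost point'' inequality $x(s_i)=x(w'_i)\le x(p_i)$, which is immediate because $w'_i$ is the leftmost point of $\Cpi{i}$ and $p_i$ lies on $\Cpi{i}$, so no point of $\Cpi{i}$ can have smaller $x$-coordinate than $w'_i$; the same observation applied to $\Cpi{i-1}$ (which also passes through $p_i$) gives $x(w'_{i-1})\le x(p_i)$.

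The second fact is that the visited vertices have non-decreasing $x$-coordinates, $x(p_0)\le x(p_1)\le\cdots\le x(p_k)$. This is the quantitative form of the ``progress from left to right'' already used to argue termination: the arc $\dR{p_i,p_{i+1}}$ is traversed from $p_i$ toward $r_i$, the rightmost intersection of $C_i$ with $[st]$, along either the upper arc (clockwise) or the lower arc (counterclockwise) of $C_i$. Writing a point of $C_i$ as $x(O_i)+\rho\cos\theta$, where $\rho$ is the radius of $C_i$, and noting that both arcs run from the leftmost point $w_i$ toward $r_i$, the $x$-coordinate is non-decreasing along the traversed sub-arc, whence $x(p_{i+1})\ge x(p_i)$. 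Lemma~\ref{lem:Chew2} then guarantees the crossing vertices $t_i$ are encountered in their order along $[st]$.

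Inequality \eqref{eq:tiOrder} follows from a short case analysis on whether $p_{i-1}$ and $p_i$ lie on the same side of $st$. If they do, the first vertex after $p_{i-1}$ crossing $st$ coincides with the first such vertex after $p_i$, so $t_{i-1}=t_i$; if they do not, then $[p_{i-1}p_i]$ crosses $st$, giving $t_{i-1}=p_i$ while $t_i$ is a strictly later vertex. In either case the index of $t_{i-1}$ is at least $i$ and at most that of $t_i$, so $x(s_i)\le x(p_i)\le x(t_{i-1})\le x(t_i)$ by the leftmost-point inequality and monotonicity.

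For \eqref{eq:siOrder} I would split on the type of $\Cpi{i}$. If $\Cpi{i}$ is of type $A_2$ or $B$ then $p_i=w'_i$, so $x(s_i)=x(p_i)\ge x(w'_{i-1})=x(s_{i-1})$ and we are done. The case $\Cpi{i}$ of type $A_1$ --- where $\Cpi{i}$ is tangent to $[st]$, $w'_i\ne p_i$, and (since $[p_{i-1}p_i]$ does not cross $st$) the points $p_{i-1},p_i$ lie on the same side --- is the main obstacle, because $x(w'_i)<x(p_i)$ and the crude bound $x(w'_{i-1})\le x(p_i)$ no longer suffices. Here I would compare the two circles through the common point $p_i$ directly. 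The cleanest route is to prove that $w'_i$ lies in the closed disk bounded by $\Cpi{i-1}$, which forces $x(w'_i)\ge x(w'_{i-1})$ since $w'_{i-1}$ is the leftmost point of that disk; should a clean containment argument fail, one parametrizes the circles through $p_i$ and uses the tangency of $\Cpi{i}$ to $st$, together with $p_{i-1}$ and $p_i$ being on the same side, to bound $x(w'_{i-1})$ against $x(w'_i)$ by an elementary computation.
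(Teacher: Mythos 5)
Your first structural fact ($x(s_i)\le x(p_i)$ and $x(w'_{i-1})\le x(p_i)$) is fine, but the second one --- the monotonicity $x(p_0)\le x(p_1)\le\cdots\le x(p_k)$ --- is false, and it carries all the weight of inequality (\ref{eq:tiOrder}). Your justification breaks down because $r_i$ is the rightmost intersection of $C_i$ with $[st]$, not the rightmost point of $C_i$: whichever of the two arcs from $w_i$ to $r_i$ passes through the east point of $C_i$ is not monotone in $x$ (the $x$-coordinate increases up to the east point and then decreases back down to $r_i$), so a step $p_i\to p_{i+1}$ straddling the east point has $x(p_{i+1})<x(p_i)$. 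The path really does overshoot: in the paper's own lower-bound construction for Theorem~\ref{thm:ChewLower} (Figure~\ref{fig:LBChew}), the final stretch of the route wraps three quarters of the way around a circle whose lowest point is $t$, so the vertices between that circle's east point and $t$ have strictly decreasing $x$-coordinates, all larger than $x(t)$. This is precisely the phenomenon that produces the $3\pi/2$ in the snail curve, so no proof of this lemma can rest on monotonicity of the $x(p_m)$; nor does Lemma~\ref{lem:Chew2} rescue it, since ordering the triangles by their intersections with $[st]$ says nothing about the $x$-coordinates of vertices such as $t_i$ that lie off the segment. With the monotonicity gone, your derivation of $x(t_{i-1})\le x(t_i)$ collapses; the paper proves it by a genuinely geometric argument (when $t_{i-1}=p_j\ne p_{j'}=t_i$, the circle $\Cpi{j-1}$ is of type $B$ and one bounds the $x$-coordinate of any point of $\Cpi{j-1}$ below $st$ and outside $\Cpi{j'-1}$ by that of the crossing point $q$ of $\arcP{j'-1}{p_{j'-1},p_{j'}}$ with $st$, which lies left of $p_{j'}$). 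The first half of (\ref{eq:tiOrder}) is salvageable without monotonicity by iterating (\ref{eq:siOrder}): $x(s_i)\le x(s_j)\le x(p_j)=x(t_{i-1})$, which is what the paper does.

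There is a second, smaller gap in the type $A_1$ case of (\ref{eq:siOrder}): your ``cleanest route'' --- that $w'_i$ lies in the closed disk bounded by $\Cpi{i-1}$ --- does not always hold (the paper explicitly treats the case where $w'_i$ is outside $\Cpi{i-1}$, arguing there that the rightmost intersection of $\Cpi{i-1}$ with $st$ lies to the left of the intersection of $w'_ip_i$ with $st$), and your fallback, ``an elementary computation,'' is never carried out. Since type $A_1$ is the only nontrivial case of (\ref{eq:siOrder}), that inequality is also left unproved.
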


\begin{proof}
We first prove (\ref{eq:siOrder}). If $p_{i}=w'_{i}$ then 
$x(s_{i}) = x(p_{i})$. Since $p_{i}$ lies on $\Cpi{i-1}$ and
$x(s_{i-1})$ is the $x$-coordinate of the leftmost point of $\Cpi{i-1}$, we have
that $x(s_{i-1})\leq x(p_{i})=x(s_{i})$. If $p_{i} \not= w'_{i}$ then
$\Cpi{i}$ is of type $A_1$. We assume without loss of generality that $p_i$ lies above $st$.
If $w'_{i}$ is on or in the interior of $\Cpi{i-1}$ then (\ref{eq:siOrder})
holds. Otherwise, the
rightmost intersection of $\Cpi{i-1}$ with $st$ must be to the left of the
intersection of $w'_ip_i$ and $st$. This, in turn, implies
(\ref{eq:siOrder})

We now prove (\ref{eq:tiOrder}). We first observe that $t_{i-1}=p_j$ and
$t_{i}=p_{j'}$ for some $i \leq j \leq j'$. Using  inequality
(\ref{eq:siOrder}), we have that $x(s_i) \leq x(s_j) \leq x(p_j) = x(t_{i-1})$,
so the first inequality in (\ref{eq:tiOrder}) holds. 
The second inequality trivially holds when $j=j'$, so we assume otherwise.
In that case, $p_j, p_{j+1}, \dots, p_{j'-1}$ must all be on the same side of
$st$. Without loss of generality, we assume that
$p_{j'}$ lies above $st$. This implies that
$[p_{j'-1}p_{j'}]$ crosses $[st]$ and therefore $\Cpi{j-1}$ is of type $B$.
Moreover,
$p_{j'-1}=w'_{j'-1}$ and $p_{j'-1}$ lies below $st$.
On the other hand, $p_j$ lies below $st$ and on $\Cpi{j-1}$, which is of
type $B$ and whose
center $O'_{j-1}$ is above $st$. Note that $p_{j'-1}$ and
$p_{j'}$ lie outside of $\Cpi{j-1}$ and that 
$x(w'_{j-1}) \leq x(w'_{j'-1})$ (recall that $w'_{j'-1} = p_{j'-1})$. 
Therefore, if $q$ is the intersection of $\arcP{j'-1}{p_{j'-1},p_{j'}}$ and $st$, no point
of $\Cpi{j-1}$ below $st$ and outside of $\Cpi{j'-1}$ has an $x$-coordinate larger
than $x(q)$. Since $x(q) < x(p_{j'})$, the second inequality in
(\ref{eq:tiOrder}) holds.
\end{proof}

\begin{figure}
\center{\figexample}
\caption{Illustration of Lemma~\ref{lem:induction} on the example of Figure~\ref{fig:exampleRouting}. The empty circles of the Delaunay triangulation and the routing path $\pathf{R}$ are drawn in gray. Worst cast circles $\Cpi{i}$, paths $\dP{i}$, and segments of height $|y(t_i)|$ are drawn in black. Lengths $|[t'_{i-1}t'_i]|$ are represented by dashed horizontal segments.}
\label{fig:example}
\end{figure}

\subsection{Proof of Theorem~\ref{thm:main}}

In this section,
we prove our main theorem.
Given two points $p$ and $q$ such that $x(p) < x(q)$ and $y(p) = y(q)$,
we define the path $\snailP{p,q}$ as follows. Let $C$ be the circle above
$pq$ that is tangent to $pq$ at $q$ and tangent to the line $x = x(p)$ at a point
that we denote by $p'$. The path $\snailP{p,q}$ consists of $[pp']$ together with
the clockwise arc from $p'$ to $q$ on $C$. We call $\snailP{p,q}$ the
\emph{snail curve}
from $p$ to $q$. Note that $|\snailP{p,q}| = (1+3\pi/2)(x(q)-x(p))$.
We also define the path $\dP{i}$ to be
$[s_iw'_i] + \arcP{i}{w'_{i}, p_{i+1}}$ for $0\leq i \leq k-1$ (see
Figure~\ref{fig:example}). 

We start with a lemma that motivates these definitions.
\begin{lemma}
\label{lem:warmup}
$|\dP{k-1}| \leq |\snailP{s_{k-1},t}|$ 
\end{lemma}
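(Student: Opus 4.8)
The plan is to compute $|\dP{k-1}|$ explicitly and compare it term by term against $(1+3\pi/2)(x(t)-x(s_{k-1}))$, the length of $\snailP{s_{k-1},t}$. First I would set up coordinates with $t$ at the origin and $[st]$ along the $x$-axis; by reflection symmetry (lengths are reflection invariant) I may assume the arc $\arcP{k-1}{w'_{k-1},t}$ lies above the $x$-axis. Write $r = x(t)-x(s_{k-1})>0$, so $s_{k-1}=(-r,0)$ and $w'_{k-1}=(-r,h)$ with $h=y(w'_{k-1})\ge 0$, and let $\rho$ be the radius of $\Cpi{k-1}$. Since $s_{k-1}$ is the vertical projection of $w'_{k-1}$, the segment $[s_{k-1}w'_{k-1}]$ is vertical of length exactly $h$; and since $w'_{k-1}$ is the leftmost point of $\Cpi{k-1}$, its centre is $O'_{k-1}=(\rho-r,h)$. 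Because $\Cpi{k-1}$ passes through $t=(0,0)$, expanding $|O'_{k-1}t|^2=\rho^2$ yields the identity $h^2=r(2\rho-r)$. Thus $|\dP{k-1}| = h + \rho\,\theta$, where $\theta$ is the angular extent of $\arcP{k-1}{w'_{k-1},t}$, and it suffices to prove $h\le r$, $\rho\le r$, and $\theta\le 3\pi/2$.

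The crux, and the step I expect to be the main obstacle, is the inequality $\rho\le r$; the other two bounds follow from it. I would derive it from the definition of $\Cpi{k-1}$ by distinguishing its type. If $\Cpi{k-1}$ is of type $A_1$, it is tangent to $[st]$; since it passes through $t\in[st]$, this tangency must occur at $t$, and because its leftmost point $w'_{k-1}$ has abscissa $x(s_{k-1})$ the circle is also tangent to the vertical line $x=x(s_{k-1})$ at $w'_{k-1}$. These are exactly the two tangency conditions defining the snail circle of $\snailP{s_{k-1},t}$, so $\Cpi{k-1}$ coincides with it, forcing $\rho=r$, $h=r$, and $w'_{k-1}=p'$. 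If instead $\Cpi{k-1}$ is of type $A_2$ (equivalently, $A_2$ meets $[st]$ twice), then its two intersections with the $x$-axis are $t=(0,0)$ and the point of abscissa $2(\rho-r)$; for the second one to lie on the segment $[st]$, hence left of $t$, we need $2(\rho-r)\le 0$, i.e.\ $\rho\le r$. In either case $\rho\le r$.

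Granting $\rho\le r$, the remaining bounds are immediate. From $h^2=r(2\rho-r)\le r^2$ I get $h\le r$. For the angular bound, $t$ lies at height $0\le h=y(O'_{k-1})$, so $t$ is on or below the horizontal diameter of $\Cpi{k-1}$, and $\rho\le r$ puts it in the right half-disk since $x(t)-x(O'_{k-1})=r-\rho\ge 0$; thus $t$ sits in the clockwise angular range $[-\pi/2,0]$ about $O'_{k-1}$, while $w'_{k-1}$ is the leftmost point at angle $\pi$. Consequently both arcs joining $w'_{k-1}$ to $t$ have extent at most $3\pi/2$, so whichever one $\arcP{k-1}{w'_{k-1},t}$ is, $\theta\le 3\pi/2$. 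Combining, $|\dP{k-1}| = h+\rho\theta \le r + r\cdot\tfrac{3\pi}{2} = (1+\tfrac{3\pi}{2})\,r = |\snailP{s_{k-1},t}|$, with equality precisely in the type-$A_1$ configuration where $\Cpi{k-1}$ is the snail circle and the routing takes the long three-quarter arc.

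Finally, I would note that the same conclusion can be cast in the language the paper reuses later: since the vertical segment of $\dP{k-1}$ meets $\Cpi{k-1}$ tangentially at its leftmost point $w'_{k-1}$, the path $\dP{k-1}$ is convex, and the bounds $\rho\le r$, $h\le r$ place it inside the region delimited by $\snailP{s_{k-1},t}+[t\,s_{k-1}]$; the stated fact that a convex path inside another is no longer then yields $|\dP{k-1}|\le|\snailP{s_{k-1},t}|$. I would nonetheless present the explicit length computation as the primary argument, since it delivers the bound directly and sidesteps verifying the containment of the arc region.
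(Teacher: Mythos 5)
Your proof is correct, but it takes a genuinely different route from the paper's. The paper disposes of this lemma in one line: $\dP{k-1}$ is convex and lies inside $\snailP{s_{k-1},t}$, so the generic fact that a convex path inside another path is no longer gives the bound immediately. You instead compute $|\dP{k-1}| = h + \rho\theta$ in explicit coordinates and bound each term separately, which requires you to establish the three quantitative facts $\rho \le r$, $h \le r$ and $\theta \le 3\pi/2$; your derivation of $\rho \le r$ from the $A_1$/$A_2$ dichotomy in the definition of $\Cpi{k-1}$ (tangency at $t$ forces $\rho = r$; a second intersection with $[st]$ at abscissa $2(\rho - r) \le 0$ forces $\rho \le r$), and of $h \le r$ from the identity $h^2 = r(2\rho - r)$, are both sound, and your observation that both arcs from $w'_{k-1}$ to $t$ have extent at most $3\pi/2$ makes the argument independent of the arc's orientation. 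What your approach buys is that it actually verifies the containment the paper merely asserts -- the inequalities $\rho \le r$, $h \le r$ are exactly what one would need to check that $\dP{k-1}$ sits inside the region bounded by $\snailP{s_{k-1},t} + [t\,s_{k-1}]$ -- at the cost of a longer, coordinate-dependent computation; the paper's argument is slicker and reuses a principle (convexity plus containment) invoked repeatedly in the proof of Lemma~\ref{lem:induction}, but leaves that verification to the reader. One cosmetic remark: in your second case you should say ``$\Cpi{k-1}$ is the circle $A_2$'' (which covers both type $A_2$ and type $B$) rather than ``of type $A_2$''; your parenthetical ``equivalently, $A_2$ meets $[st]$ twice'' shows you mean the right thing, so this does not affect the argument.
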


\begin{proof}
This follows from the fact that $\dP{k-1}$ from $s_{k-1}$ to
$t$ is convex and inside $\snailP{s_{k-1},t}$.
\end{proof}

The following lemma is the key to proving Theorem~\ref{thm:main}.
\begin{lemma}
\label{lem:induction}
For all $0 < i < k$ and $\delta = \deltaMin$, 
\begin{equation}
\label{eq:induction2}
|\dP{i-1}| + |y(t_{i-1})| \leq |\dP{i}\subP{s_i,p_i}| + |\snailP{s_{i-1},s_i}| + |y(t_i)| + \delta |[t'_{i-1}t'_{i}]|.
\end{equation}
Moreover, if $\Cpi{i-1}$ is of type $A$ (then $t_{i-1}=t_{i}$), the previous inequality is equivalent to
\begin{equation}
\label{eq:induction1}
|\dP{i-1}| \leq |\dP{i}\subP{s_i,p_i}| + |\snailP{s_{i-1},s_i}|.
\end{equation}
\end{lemma}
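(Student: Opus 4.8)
The plan is to argue by cases on the type of $\Cpi{i-1}$, handling the two type-$A$ cases with a single containment argument and isolating type $B$, where the slack terms are genuinely needed. I would first settle the equivalence of the two displayed inequalities. If $\Cpi{i-1}$ is of type $A$, then $[p_{i-1}p_i]$ does not cross $[st]$, so $p_{i-1}$ and $p_i$ lie on the same side of $st$; hence the first vertex after $p_{i-1}$ whose edge back to $p_{i-1}$ meets $st$ coincides with the first such vertex after $p_i$, giving $t_{i-1}=t_i$. Consequently $|y(t_{i-1})|=|y(t_i)|$ and $t'_{i-1}=t'_i$, so the term $\delta|[t'_{i-1}t'_i]|$ vanishes and the two height terms cancel, and (\ref{eq:induction2}) collapses to (\ref{eq:induction1}). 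Throughout I would lean on Lemma~\ref{lem:siOrder}: the inequality $x(s_{i-1})\le x(s_i)$ guarantees that the snail curve $\snailP{s_{i-1},s_i}$ is correctly oriented, while $x(s_i)\le x(t_{i-1})\le x(t_i)$ guarantees that the horizontal slack $|[t'_{i-1}t'_i]|=x(t_i)-x(t_{i-1})$ is nonnegative.

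For the type-$A$ case I would prove (\ref{eq:induction1}) using only the convexity principle recalled in the preliminaries: a convex path contained in the region bounded by another path with the same endpoints is no longer than it. Both $\dP{i-1}$ and the composite $\snailP{s_{i-1},s_i}+\dP{i}\subP{s_i,p_i}$ run from $s_{i-1}$ to $p_i$. The path $\dP{i-1}=[s_{i-1}w'_{i-1}]+\arcP{i-1}{w'_{i-1},p_i}$ is convex, since $w'_{i-1}$ is the leftmost point of $\Cpi{i-1}$ and the circle's tangent there is vertical, so the vertical segment and the single circular arc join smoothly and turn consistently. It therefore suffices to show $\dP{i-1}$ lies inside the composite path. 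I would split into subcases $A_1$ and $A_2$ according to whether $w'_{i-1}=p_{i-1}$, and further according to whether $w'_{i-1}$ lies inside or outside $\Cpi{i}$; when it lies outside one compares $\Cpi{i-1}$ with an auxiliary circle through $p_i$ (tangent to $st$ or to the vertical as appropriate), reducing the containment to the relative position of the worst-case circles. The generous over-arch of the snail is precisely what absorbs the vertical segment and the portion of the arc of $\dP{i-1}$ that protrudes beyond $\dP{i}\subP{s_i,p_i}$, so that the comparison closes up.

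The main obstacle is the type-$B$ case, where $[p_{i-1}p_i]$ crosses $[st]$ and the arc $\arcP{i-1}{w'_{i-1},p_i}$ sweeps from one side of $st$ to the other; then $\dP{i-1}$ is \emph{not} contained in the composite path and real slack is required. Here $p_{i-1}=w'_{i-1}$, so $s_{i-1}$ is the foot of $p_{i-1}$, and $t_{i-1}=p_i$ while $t_i$ is a strictly later vertex, making both $|y(t_i)|$ and $\delta|[t'_{i-1}t'_i]|$ available as slack. I would parametrize the extremal configuration by the angle $\theta$ subtended by $\arcP{i-1}{w'_{i-1},p_i}$ at $O'_{i-1}$ and the angle $\alpha$ subtended by $\arcP{i}{w'_i,p_i}$ at $O'_i$, express the deficit $|\dP{i-1}|+|y(t_{i-1})|-\bigl(|\dP{i}\subP{s_i,p_i}|+|\snailP{s_{i-1},s_i}|+|y(t_i)|\bigr)$ as a function of $\theta$, $\alpha$, and $|[t'_{i-1}t'_i]|$, and show it never exceeds $\delta|[t'_{i-1}t'_i]|$.

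By homogeneity I would normalize the relevant length scale to $1$, reducing the bound to the claim that the deficit-per-unit-horizontal-span is at most $\delta=\deltaMin$ over all admissible $(\theta,\alpha)$. The hard part is exactly this optimization: $\delta$ is defined to be the maximum of a transcendental expression in $\theta$ and $\alpha$, and verifying that it equals $\deltaMin$ requires reducing the two-parameter problem (via monotonicity in one variable) to a single-variable maximization, together with a check that every geometrically admissible configuration—i.e.\ every placement in which $\Cpi{i-1}$ and $\Cpi{i}$ genuinely arise as the worst-case circles and intersect at $p_i$—is dominated by the configuration that defines $\delta$. Combining the type-$A$ bound, the type-$B$ bound, Lemma~\ref{lem:warmup}, and a telescoping of (\ref{eq:induction2}) over $i$ then yields the length bound on $\pathR$ driving Theorem~\ref{thm:main}.
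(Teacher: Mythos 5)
There is a genuine gap in your treatment of the type-$A$ case. The case distinction that actually matters there is not the one you make (whether $w'_{i-1}=p_{i-1}$, or whether $w'_{i-1}$ lies inside $\Cpi{i}$) but the type of $\Cpi{i}$. When $\Cpi{i}$ is of type $A_2$ or $B$ one has $p_i=w'_i$, so $\dP{i}\subP{s_i,p_i}$ is a vertical segment and the right-hand side can be rearranged into the equal-length path $[s_{i-1}X]+\snailP{X,p_i}$, where $X$ is the projection of $p_i$ onto the vertical line through $s_{i-1}$; that rearranged path does enclose the convex path $\dP{i-1}$. But when $\Cpi{i}$ is of type $A_1$, the composite path $\snailP{s_{i-1},s_i}+\dP{i}\subP{s_i,p_i}$ returns to the $x$-axis at $s_i$, while the arc of $\dP{i-1}$ passes above $s_i$ at positive height, so $\dP{i-1}$ is \emph{not} inside the region bounded by that composite path and $[p_is_{i-1}]$: your ``single containment argument'' fails precisely in this subcase. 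The paper's resolution is to complete $\dP{i}\subP{s_i,p_i}$ to the curve $\snailP{s_i,b_i}=\dP{i}\subP{s_i,p_i}+\arcP{i}{p_{i},b_i}$ (with $b_i$ the lowest point of $\Cpi{i}$), translate $\Cpi{i}$ down to the line $y=y(b_{i-1})$, and compare via a homothety applied to $[Xs_{i-1}]+\dP{i-1}$; nothing in your sketch supplies this idea or a substitute for it.

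The type-$B$ outline is closer in spirit to the paper but under-determined. You omit the easy subcase $|y(p_{i-1})-y(p_{i})|<|y(p_{i-1})-y(t_i)|$, where a translation-and-convexity argument succeeds with no $\delta$-slack at all, and in the remaining subcase your ``deficit per unit horizontal span is at most $\delta$'' normalization conceals both the two independent scales (the radii $1$ and $R$ of $\Cpi{i-1}$ and $\Cpi{i}$) and the geometric input that makes the slack usable: one must \emph{lower-bound} the span, as the paper does by showing $|[t'_{i-1}t'_{i}]|>2\cos(\theta)$ for $\theta\le\pi/4$ using the auxiliary point $p'_i$ and the fact that $t_i$ lies outside $\Cpi{i-1}$, and one must control the sign of $1+\alpha-(1+3\pi/2)(1-\cos(\alpha))$ to eliminate the $R$-dependence before any single-variable calculus in $\theta$ can close the estimate. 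Your reduction ``via monotonicity in one variable'' is asserted rather than exhibited. The equivalence of (\ref{eq:induction2}) and (\ref{eq:induction1}) in the type-$A$ case and the appeal to Lemma~\ref{lem:siOrder} are fine.
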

This lemma is illustrated in Figure~\ref{fig:example}.
We first show how to use Lemma~\ref{lem:induction} to prove
Theorem~\ref{thm:main},
then we prove Lemma~\ref{lem:induction} in Section~\ref{subsection proof key lemma}.\\
\begin{proofof}{\bf Theorem~\ref{thm:main}.}
By Lemma~\ref{lem:siOrder}, $\sum_{i=1}^{k-1} |[t'_{i-1}t'_i]| < |[st]|$
and $\sum_{i=1}^{k} |\snailP{s_{i-1},s_i}| = |\snailP{s,t}|$. By summing the  $k-1$ inequalities from Lemma~\ref{lem:induction} and the
inequality from Lemma~\ref{lem:warmup}, we get
$$\sum_{i=1}^{k} |\dP{i-1}| + |y(t_0)| < \sum_{i=1}^{k-1} |\dP{i}\subP{s_i,p_i}| + |\snailP{s,t}| + |y(t_{k-1})| + \delta |[st]|.$$
The fact that $t_{k-1} = t$ implies $y(t_{k-1}) = 0$.
Therefore,
since $\dP{i-1} = \arcP{i-1}{p_{i-1}p_i} + \dP{i-1}\subP{s_{i-1},p_{i-1}}$,
we have
$$|\pathR| \leq \sum_{i=1}^{k} \arcP{i-1}{p_{i-1}p_i} < |\snailP{s,t}| + \delta |[st]| \leq  (1.185043874 + 3\pi/2) |[st]|$$
which completes the proof.
\end{proofof}

\subsection{Proof of the Key Lemma}
\label{subsection proof key lemma}

In this section,
we prove Lemma~\ref{lem:induction}. We will make use of the following lemma
whose proof is in the Appendix.\\

\begin{lemma}
\label{lem:alpha}
Let $\theta = \angle w'_{i-1}O'_{i-1}p_i$ and
$\alpha = \angle w'_iO'_{i}p_i$ be the angles defined using the orientations
$\arcP{i}{p_{i-1},p_{i}}$ and $\arcP{i}{p_i,p_{i+1}}$, respectively. Then
\begin{equation}
\label{eq:anglesp}
0 \leq \alpha < \theta < 3\pi/2.
\end{equation}
\end{lemma}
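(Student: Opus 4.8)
The plan is to prove Lemma~\ref{lem:alpha} by analyzing the geometry of the circles $\Cpi{i-1}$ and $\Cpi{i}$, both of which pass through $p_i$. Recall that $\theta = \angle w'_{i-1}O'_{i-1}p_i$ is measured on $\Cpi{i-1}$ along the orientation $\arcP{i}{p_{i-1},p_i}$, and $\alpha = \angle w'_iO'_ip_i$ is measured on $\Cpi{i}$ along the orientation $\arcP{i}{p_i,p_{i+1}}$. The inequality $\theta < 3\pi/2$ should be the easiest part: since $w'_{i-1}$ is the leftmost point of $\Cpi{i-1}$ and the arc from $w'_{i-1}$ reaching $p_i$ subtends at most three quarters of the circle in the relevant routing configuration, I would bound $\theta$ directly by tracking how far the routing arc can wrap around $\Cpi{i-1}$ before reaching $p_i$. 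The nonnegativity $0 \leq \alpha$ is definitional once the orientation convention is fixed. The real content is $\alpha < \theta$.

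\textbf{Comparing the two angles.}
To prove $\alpha < \theta$ I would compare the two angles by relating both to the position of $p_i$ relative to $st$ and to the leftmost points $w'_{i-1}$, $w'_i$. The key geometric fact I would exploit is that $p_i$ lies on both $\Cpi{i-1}$ and $\Cpi{i}$, together with the ordering established in Lemma~\ref{lem:siOrder}: we have $x(s_{i-1}) \leq x(s_i)$, i.e. the leftmost point of $\Cpi{i}$ is not to the left of that of $\Cpi{i-1}$. I would split into the three circle types (type $A_1$, type $A_2$, type $B$) for $\Cpi{i}$, since the definition of $\Cpi{i}$ (whether it is the tangent-to-$st$ circle $A_1$ or the vertical-tangent circle $A_2$) controls how the radius to $p_i$ sits relative to the horizontal through $O'_i$. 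In each case the angle $\alpha$ is determined by the vertical drop of $p_i$ below $O'_i$ and the horizontal displacement, and I would show this is strictly dominated by the corresponding quantity for $\theta$, using that $\Cpi{i}$ is the worst-case (largest admissible) circle for the arc leaving $p_i$ while $\Cpi{i-1}$ already had to accommodate the arc arriving at $p_i$.

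\textbf{The main obstacle.}
The hard part will be handling the case distinctions cleanly, particularly the type $B$ case where $[p_{i-1}p_i]$ crosses $[st]$ and $p_i = w'_i$, so that $\alpha$ is measured from $p_i$ itself. Here the orientations of the two arcs differ (one arrives from below, one leaves above, or vice versa), and I must be careful that the angle convention makes $\alpha$ and $\theta$ comparable rather than accidentally reversing a sign. I expect the cleanest route is to express both $\alpha$ and $\theta$ as functions of the signed angle that the radius $O'p_i$ makes with the downward vertical (or with the leftward horizontal to the respective $w'$), reduce the claim $\alpha < \theta$ to a monotonicity statement about these signed angles under the constraint $x(s_{i-1}) \leq x(s_i)$, and then verify that the strictness is never lost at the boundary between types. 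Because the statement defers its proof to the Appendix and feeds into the calibrated constant $\delta = \deltaMin$ of Lemma~\ref{lem:induction}, I would keep careful track of strict versus non-strict inequalities throughout, since the strictness of $\alpha < \theta$ is exactly what later guarantees the telescoping argument in the proof of Theorem~\ref{thm:main} closes with room to spare.
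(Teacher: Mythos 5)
Your outline gets the easy parts right ($0\leq\alpha$ is indeed definitional, and $\theta<3\pi/2$ does follow by locating $p_i$ on the arc of $\Cpi{i-1}$ between $w'_{i-1}$ and the rightmost intersection of $\Cpi{i-1}$ with $st$, exactly as the paper does), but the route you propose for the core inequality $\alpha<\theta$ has a genuine gap. You want to derive it from the fact that $p_i$ lies on both $\Cpi{i-1}$ and $\Cpi{i}$ together with the ordering $x(s_{i-1})\leq x(s_i)$ from Lemma~\ref{lem:siOrder}, reduced to a monotonicity statement about signed angles. That input is insufficient: writing $\rho_{i-1},\rho_i$ for the radii, the constraint $x(s_{i-1})\leq x(s_i)\leq x(p_i)$ only says $\rho_{i-1}(1-\cos\theta)\geq\rho_i(1-\cos\alpha)$, which places no bound on $\alpha$ versus $\theta$ when the radii differ (a large $\Cpi{i-1}$ with tiny $\theta$ and a small $\Cpi{i}$ with $\alpha$ near $\pi$ satisfies it). Nothing in your plan rules this out, because you never invoke the one property that makes the lemma true: the emptiness of the Delaunay circumcircles.

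The paper's argument has two ingredients, neither of which appears in your proposal. First (Lemma~\ref{lem:angles}), it proves the analogous inequality $\angle w_iO_ip_i<\angle w_{i-1}O_{i-1}p_i$ for the \emph{actual circumcircles} $C_{i-1}$ and $C_i$, by walking through the whole fan of circumcircles $C_i^0=C_{i-1},C_i^1,\dots,C_i^l=C_i$ of the triangles incident to $p_i$ that intersect $[st]$, and showing the angle decreases at each step because each successive circumcenter lies on the perpendicular bisector of a chord $[p_iq_j]$ with $q_j$ below $st$, on the far side of the previous radius. Second, it transfers this to the worst-case circles: $O'_{i-1}$ is obtained from $O_{i-1}$ by sliding along the perpendicular bisector of the routing chord $[p_{i-1}p_i]$ toward that chord, which strictly \emph{increases} the central angle to the arrival endpoint $p_i$ and \emph{decreases} it at the departure endpoint; applying the latter on $C_i$ and the former on $C_{i-1}$ sandwiches $\alpha<\angle w_iO_ip_i<\angle w_{i-1}O_{i-1}p_i<\theta$. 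Your case split on the type of $\Cpi{i}$ and your appeal to Lemma~\ref{lem:siOrder} cannot substitute for this: the comparison is inherently mediated by the empty circles $C_{i-1},C_i$ and the intermediate triangles around $p_i$, not by the relative positions of the two worst-case circles alone.
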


\begin{proofof}{\bf Lemma~\ref{lem:induction}.}
Notice that if $\Cpi{i-1}$ is of type $A$,
then $|y(t_{i-1})| = |y(t_i)|$.
Hence,
in this case,
it sufficient to prove
\begin{equation}
\label{eq:induction1-bis}
|\dP{i-1}| \leq |\dP{i}\subP{s_i,p_i}| + |\snailP{s_{i-1},s_i}| .
\end{equation}
For the rest of the proof,
we consider three cases depending on the types of $\Cpi{i-1}$ and
$\Cpi{i}$.

\noindent $\bullet$ {\bf $\Cpi{i-1}$
is of type $A$ and $\Cpi{i}$ is of type $A_2$ or $B$}.
In this case,
$p_i = w'_i$ from which $x(s_i) = x(p_i)$ follows.
Let $X$ be the orthogonal projection
of $p_i$ onto $s_{i-1}w'_{i-1}$. Then
\begin{equation}\label{eq:flip}
|[s_{i-1},X]+\snailP{X,p_i}| =  |\dP{i}\subP{s_i,p_i}| + |\snailP{s_{i-1},s_i}|.
\end{equation}
Since the path $\dP{i-1}$ is convex and inside the path 
$[s_{i-1},X]+\snailP{X,p_i}$
(see Figure~\ref{fig:caseA1B}),
\begin{figure}[hbt!]
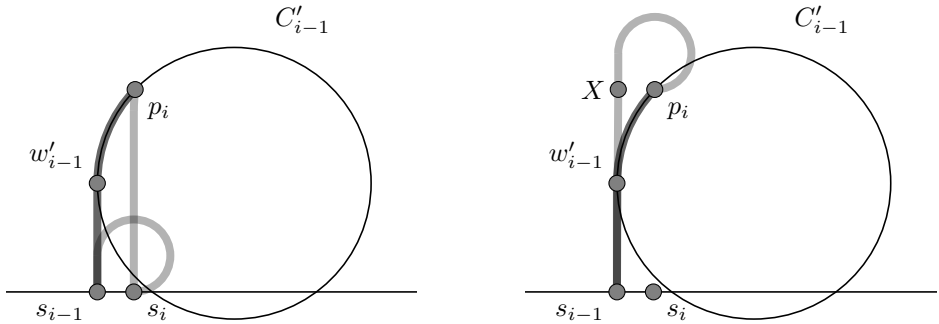

  \center{\figcaseAOneB}
  \caption{Illustration of the case when $\Cpi{i-1}$
is of type $A$ and $\Cpi{i}$ is of type $A_2$ or $B$.}
  \label{fig:caseA1B}
\end{figure}
we get
$|\dP{i-1}| \leq |[s_{i-1},X]+\snailP{X,p_i}|$. 
Applying Inequality (\ref{eq:flip}) completes the proof of this case.

\noindent $\bullet$ {\bf $\Cpi{i-1}$ is of type $A$ and $\Cpi{i}$ is of type $A_1$.}
(See Figure~\ref{fig:caseAA1}.) 
Let $b_i$ the lowest point of $\Cpi{i}$.
Let $X$ and $Y$ be respectively the projections of $s_{i-1}$ and $b_i$ on the line $y=y(b_{i-1})$.
We consider the snail curve $\snailP{s_{i},b_i} = \dP{i}\subP{s_i,p_i} + \arcP{i}{p_{i},b_i}$
($\arcP{i}{p_{i},b_i}$ see Figure~\ref{fig:caseAA1}).
\begin{figure}[hbt!]
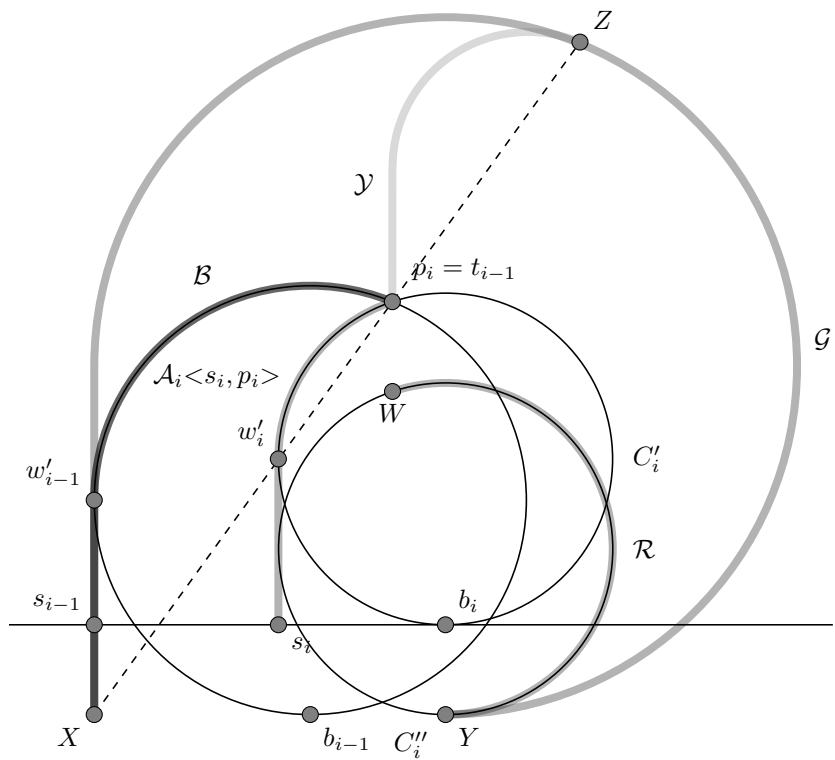

\center{\figcaseAAOne}
  \caption{Illustration of the case when $\Cpi{i-1}$ is of type $A$ and $\Cpi{i}$ is of type $A_1$.}
  \label{fig:caseAA1}
\end{figure}

Let $\pathf{G} = \snailP{X,Y}$. We have:
\begin{equation}
\label{eq:Gp}
|\dP{i}\subP{s_i,p_i}| + |\snailP{s_{i-1},s_i}| + |\arcP{i}{p_{i},b_i}| =  |\snailP{s_{i-1},b_i}| =|\pathf{G}|.
\end{equation}
Let $\pathf{B}=[Xs_{i-1}]+\dP{i-1}$
and $Z\neq X$ be the intersection of $X p_i$ with $\pathf{G}$. Denote by $\pathf{Y}$ the path from $p_i$ to $Z$ that is homothetic to $\pathf{B}$. Note that $\pathf{B}$ and $\pathf{Y}$ are both homothetic to $\pathf{G}\subP{X,Z}$.
Hence,
\begin{equation}\label{eq:homoY}
|\pathf{B} + \pathf{Y}| = |\pathf{G}\subP{X,Z}|.
\end{equation}

Let $C''_i$ be the curve obtained by translating down $\Cpi{i}$ until $b_i$ lies on $Y$.
Denote by $\pathf{R}$ and $W$ the images of $\arcP{i}{p_{i},b_i}$ and $p_i$ by the same translation, respectively. The circle $C''_i$ is tangent to the circular section of $\snailP{s_{i-1},b_i}$ at $Y$. Moreover,
the radius of $C''_i$ is smaller than the radius of the circular section of $\snailP{s_{i-1},b_i}$. Hence, it does not intersect $\snailP{s_{i-1},b_i}$. This implies that if $\pathf{R}$ intersects the line $p_iZ$, the intersection points must be in $[p_iZ]$. We can show that $\pathf{R}$ does not intersect $\pathf{Y}$.
Since $\arcP{i}{p_{i},b_i}$ is convex and lies inside $[Wp_i]+\pathf{Y}+\pathf{G}\subP{Z,Y}$,
we have
\begin{equation}\label{eq:insideG1}
|\pathf{R}| \leq |[Wp_i]+\pathf{Y}+\pathf{G}\subP{Z,Y}|.
\end{equation}

Summing (\ref{eq:homoY}) and (\ref{eq:insideG1}),
and removing $|\pathf{Y}|$ and $|[b_iY]|=|[Xs_{i-1}]|$ from both sides,
we get $|\dP{i-1}| + |\arcP{i}{p_{i},b_i}| \leq |\pathf{G}|$.
Using (\ref{eq:Gp}) and removing $|\arcP{i}{p_{i},b_i}|$ from both sides,
we get (\ref{eq:induction1-bis}).

\noindent $\bullet$ {\bf $\Cpi{i-1}$ is of type $B$ and $\Cpi{i}$ is of type $A$ or $B$}.
In this case,
$w'_{i-1} = p_{i-1}$ and
$t_{i-1} = p_i \not= t_i$. 
We consider the
case where $|y(p_{i-1})-y(p_{i})| < |y(p_{i-1})-y(t_i)|$ first
(refer to Figure~\ref{fig:caseBall2}).
\begin{figure}[hbt!]
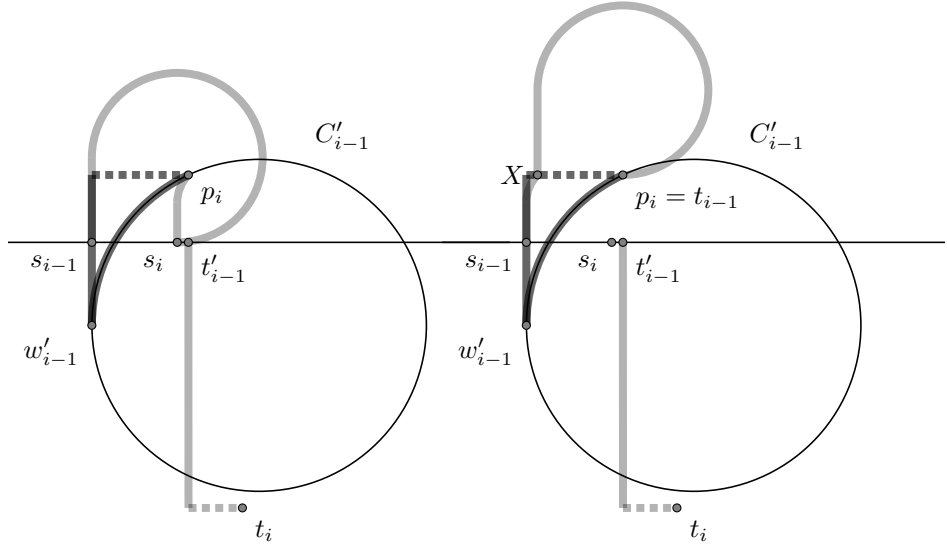

\center{\figcaseBallTwo}
  \caption{Illustration of the case when $\Cpi{i-1}$ is of type $B$ and $|y(p_{i-1})-y(p_{i})| < |y(p_{i-1})-y(t_i)|$.}
  \label{fig:caseBall2}
\end{figure}
Recall that 
$\dP{i-1} = \arcP{i-1}{p_{i-1}p_i} + [s_{i-1}p_{i-1}]$ and let ${\cal P^*}$ be the curve obtained by translating $\dP{i}\subP{s_i,p_i}$ to the left until $s_i$ lies on $s_{i-1}$.
Denote the highest point of ${\cal P^*}$ by $X$.
Notice that $x(p_i) - x(X) = x(s_i) - x(s_{i-1})$, $\arcP{i-1}{p_{i-1}p_i}$ is convex
and $\arcP{i-1}{p_{i-1}p_i}$ is inside $[p_{i-1}s_{i-1}] + {\cal P^*} + \snailP{X, p_i}$, which is also convex.
Consequently, we get
\begin{equation}
\label{eq:halfTi1}
|\arcP{i-1}{p_{i-1}p_i}| \leq |[p_{i-1}s_{i-1}]| + |\snailP{s_{i-1},s_i}| + |\dP{i}\subP{s_i,p_i}|.
\end{equation}
Moreover, since  $|y(p_{i-1})-y(p_{i})| < |y(p_{i-1})-y(t_i)|$, we have
\begin{equation}
\label{eq:halfTi2}
|[s_{i-1}p_{i-1}]| + y(t_{i-1}) \leq |y(t_i)|-|[s_{i-1}p_{i-1}]| .
\end{equation}
Summing (\ref{eq:halfTi1}) and (\ref{eq:halfTi2}),
we get
$$|\dP{i-1}| + |y(t_{i-1})| \leq |\dP{i}\subP{s_i,p_i}| + |\snailP{s_{i-1},s_i}| + |y(t_i)|.$$
Notice that we did not need the additional potential
$\delta|[t'_{i-1}t'_i]|$ in this case.

For the rest of the proof,
we can assume that $|y(p_{i-1})-y(p_{i})| \geq |y(p_{i-1})-y(t_i)|$.
If we assume that $p_i$ lies above $st$, then $t_i$ must lie below $st$.
The point $t_i$ is outside of $C_{i-1}$.
By Lemma~\ref{lem:siOrder},
$x(p_i) = x(t_{i-1}) < x(t_i)$.
Moreover,
all points $p$ on $\Cpi{i-1}$ or inside of it, and such that $x(p_i) < x(p)$
are in the interior of $C_{i-1}$.
Therefore,
$t_i$ is outside of $\Cpi{i-1}$.

Recall that by Lemma~\ref{lem:alpha}, if $\theta = \angle w'_{i-1}O'_{i-1}p_i$
and $\alpha = \angle w'_iO'_{i}p_i$, then $0 \leq \alpha < \theta < 3\pi/2$.
Without loss of generality, assume that the radius of $\Cpi{i-1}$ is $1$
and the radius of $\Cpi{i}$ is $R$. Then we have
$|\dP{i}\subP{s_i,p_i}| = (1 + \alpha)R$ and 
$[s_it'_{i-1}] = (1-\cos(\alpha))R$. Let $D$ be the difference between
the left-hand side and the right-hand side of inequality (\ref{eq:induction2}). We can write $D$ as
\begin{align*}
\label{eq:worstcase}
 D &= |\snailP{s_{i-1},s_{i}}| +|\dP{i}\subP{s_i,p_i}| + \delta
|[t'_{i-1}t'_{i}]| + |y(t_i)| - |\dP{i-1}| - |y(t_{i-1})| \\
 &=  (1+3\pi/2)(1-\cos(\theta) - (1-\cos(\alpha))R) + (1+\alpha)R +\delta |[t'_{i-1}t'_{i}]|+ |y(t_i)|  -\theta - \sin(\theta).
\end{align*}
It remains to prove that $D>0$.

We first consider the case where $\theta \leq \pi/4$,
which implies that $\alpha < \pi/4$ as well.
Let $p'_{i} \neq p_i$ be the intersection of
$\Cpi{i-1}$ with
the horizontal line through $p_i$.
Since $\theta \leq \pi/4$,
we have $x(p'_i)> x(O'_i)$.

Since $|y(p_{i-1})-y(p_{i})| < |y(p_{i-1})-y(p)|$
for all points $p$ outside of $\Cpi{i-1}$ such that $x(p_i) \leq x(p) \leq x(p'_i)$,
it follows that $x(t_i) > x(p'_i)$.
Note that $\angle w'_{i-1}O'_{i-1}p'_{i} = \pi - \theta$,
as illustrated in Figure~\ref{fig:caseBall}. 
\begin{figure}[hbt!]
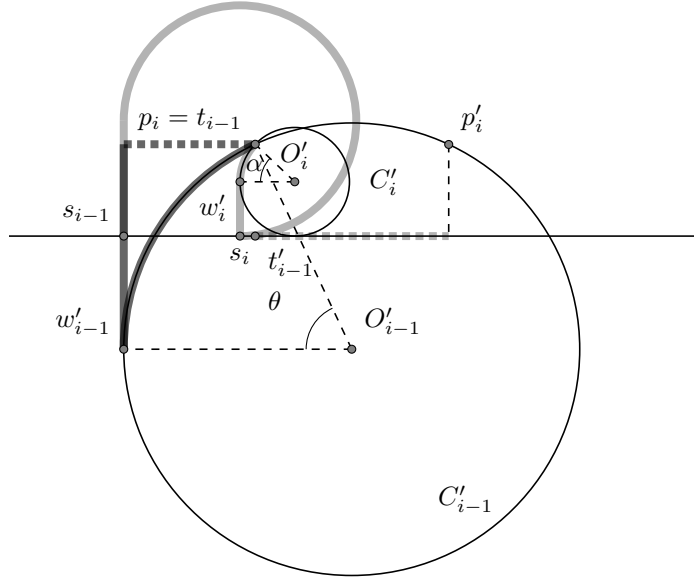

\center{\figcaseBall}
  \caption{Illustration of the case of when $\Cpi{i-1}$ is of type $B$  and $|y(p_{i-1})-y(p_{i})| \geq |y(p_{i-1})-y(t_i)|$.}
  \label{fig:caseBall}
\end{figure}
Since $|[t'_{i-1}t'_i]| > |[t_{i-1}p'_{i}]| = 2 \cos(\theta)$
(recall that $p_i = t_{i-1}$),
we have
\begin{align*}
D &\geq  (1+3\pi/2)(1-\cos(\theta) - (1-\cos(\alpha))R) + (1+\alpha)R +2\delta\cos(\theta) -\theta - \sin(\theta) \\
&\geq  R[1+\alpha-(1+3\pi/2)(1-\cos(\alpha))] + (1+3\pi/2)(1-\cos(\theta)) +2\delta\cos(\theta) -\theta - \sin(\theta).
\end{align*}
There exists an $\alpha_0>\pi/4$ such that $1 + \alpha - (1+3\pi/2)(1-\cos(\alpha))$ is positive for all $\alpha\in[0,\alpha_0]$.
Therefore,
to prove that $D > 0$ (and therefore that inequality (\ref{eq:induction2}) holds),
it is sufficient to prove that
\begin{align*}
  0 &\leq (1+3\pi/2)(1-\cos(\theta)) + 2\delta\cos(\theta) - \theta - \sin(\theta).
\end{align*}
If we take $\delta = \deltaMin$,
we can show that this inequality is true using elementary calculus arguments.

To complete the proof,
it remains to consider the case where $\theta\in [\pi/4,\pi]$.
If $\alpha\leq \alpha_0$,
we have $D \geq (1+3\pi/2)(1-\cos(\theta)) - \theta - \sin(\theta)$,
which is positive for all $\theta\in [\pi/4,\pi]$.
If $\alpha \in [\alpha_0,\pi]$, $1 + \alpha - (1+3\pi/2)(1-\cos(\alpha))$
is negative and decreasing.
Thus,
since $\alpha\leq \theta$ and $R<1$,
we obtain
\begin{align*}
D &\geq  1+\theta-(1+3\pi/2)(1-\cos(\theta)) + (1+3\pi/2)(1-\cos(\theta)) +\delta |[t'_{i-1}t'_{i}]|+ |y(t_i)|  -\theta - \sin(\theta)\\
 &\geq 1-\sin(\theta)+\delta |[t'_{i-1}t'_{i}]|+ |y(t_i)| .
\end{align*}
This lower bound is trivially positive,
hence inequality (\ref{eq:induction2}) holds in all cases.
\end{proofof}

\section{Lower Bounds}
\label{sec:lower}
In this section, we provide lower bounds on the routing ratio and the competitive ratio of any $k$-local routing algorithm on the Delaunay triangulation and the $L_1$- and $L_\infty$-Delaunay triangulation.

\begin{theorem}
\label{thm:ChewLower}
The routing ratio of Chew's routing algorithm on Delaunay triangulations is at least $5.7282$.
\end{theorem}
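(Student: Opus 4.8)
The plan is to construct an explicit point set whose Delaunay triangulation forces Chew's algorithm to follow a path whose length approaches $(1+3\pi/2)|[st]|$ from above, matching the configuration sketched in Figure~\ref{fig:LBChew}. The intuition comes directly from the snail curve $\snailP{s,t}$ that governs the upper bound: to make the algorithm pay nearly the full snail length, I would arrange the rightmost triangles so that their circumscribing circles $C_i$ mimic the worst-case circles $\Cpi{i}$, forcing the routing arcs $\dR{p_i,p_{i+1}}$ to trace out something arbitrarily close to a vertical segment of length $|[st]|$ followed by a three-quarter circular arc of radius $|[st]|$.

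Concretely, I would place $s$ at the left and $t$ at the right with $y(s)=y(t)=0$, and build a vertical ``tower'' of points just to the right of $s$ together with a dense circular arc of points wrapping around $t$. The tower forces the algorithm to climb straight up (contributing the length-$1$ segment of the snail curve, scaled by $|[st]|$), after which a finely sampled sequence of vertices lying on a circle of radius essentially $|[st]|$ centered near $t$ forces the algorithm to sweep clockwise through an angle approaching $3\pi/2$ before descending to $t$. The two TikZ pictures in \figLBChew\ encode exactly this: the left panel is the tower near $s$ and the right panel is the circular fan around $t$.

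The key steps, in order, are: (i) specify coordinates for the tower vertices and the fan vertices (parametrized by a sampling density that will tend to infinity); (ii) verify that this point set is in general position and that each intended edge is genuinely Delaunay, i.e. each rightmost triangle $T_i$ along $[st]$ has an empty circumscribing circle; (iii) confirm that the forwarding rule (walk clockwise on the upper arc, counterclockwise on the lower arc, toward the far vertex of the rightmost triangle) selects exactly the tower-then-fan path rather than some shortcut; and (iv) compute the total length of the resulting arcs $\dR{p_i,p_{i+1}}$ and show it converges to $5.7282|[st]|$, slightly exceeding $(1+3\pi/2)|[st]|\approx 5.712|[st]|$. The small excess over $1+3\pi/2$ arises because the climbing segment and the final descent are not perfectly aligned with the ideal snail curve, and I would pin down the limiting constant $5.7282$ by a direct geometric length computation on the limiting configuration.

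The main obstacle will be step (ii)–(iii): ensuring the Delaunay property and that Chew's rightmost-triangle rule actually traverses the intended vertices. It is easy to write down points that \emph{lie} on the desired circles, but one must check that no stray vertex falls inside any circumscribing circle (so the edges are truly in the triangulation) and that at each vertex $p_i$ the rightmost triangle is the one whose far vertex is the next intended point. This requires careful control of the relative radii and centers of the fan circles so that successive triangles are ordered along $[st]$ (as in Lemma~\ref{lem:Chew2}) and the algorithm has no opportunity to jump ahead. Once the configuration is validated, the length computation is a routine sum of arc lengths that I would take to its limit to extract the constant.
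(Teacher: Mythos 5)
Your high-level strategy --- an explicit Delaunay configuration that forces Chew's algorithm along a snail-like path, followed by a check that the intended triangles are Delaunay and are selected by the rightmost-triangle rule, and a limiting length computation --- matches the paper's. But there is a genuine gap in how you reach the constant $5.7282$. The tower-plus-fan construction you describe converges \emph{exactly} to the snail curve $\snailP{s,t}$, whose length is $(1+3\pi/2)|[st]|\approx 5.7124\,|[st]|$; in the limit there is no residual ``misalignment'' to exploit, so your configuration proves a lower bound of only $1+3\pi/2$, not $5.7282$. The paper explicitly remarks, immediately after the proof, that the bound is \emph{strictly larger} than $|\snailP{s,t}|/|[st]| = 1+3\pi/2$ --- that strict excess is precisely the non-trivial content of the theorem, and your proposal does not supply a mechanism for it.

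The missing idea is an extra gadget near $s$. The paper places $s$ as the west point of a small auxiliary circle $C_0$ and chooses the large circle $C_1$ so that its west point $w_1$ lies on $C_0$ \emph{below} the $x$-axis; after a suitable perturbation the first two routing edges are $[sp_1]$ and $[p_1p_2]$ with $p_1=w_1$ and $p_2$ the other intersection of $C_0$ and $C_1$. The algorithm is thus forced into an initial downward detour before the nearly vertical climb and the three-quarter arc begin, so the limiting path is $[sp_1]+[p_1p_2]+\snailP{p_2,t}$ rather than $\snailP{s,t}$, and the left panel of Figure~\ref{fig:LBChew} depicts exactly this detour (not merely a tower). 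The constant $5.7282$ then comes from numerically optimizing the centers and radii of $C_0$ and $C_1$ (the paper takes $C_0$ centered at $(-0.7652\ldots,0)$ with radius $0.2369\ldots$ and $C_1$ centered at $(0,-0.0320\ldots)$ with radius $1$, giving a path of length about $11.466$ against $|[st]|\approx 2.0017$). Without this detour gadget and the accompanying optimization, step (iv) of your plan cannot produce a limit exceeding $1+3\pi/2$.
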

\begin{proof}
  Let $C_0$ and $C_1$ be two circles such that the west point of $C_0$ lies on the $x$-axis and the west point $w_1$ of $C_1$ lies on $C_0$ and below the $x$-axis. Let $s$ be the west point of $C_0$ and let $t$ the rightmost intersection of $C_1$ with the $x$-axis. Let $p_1 = w_1$ and $p_2$ be the intersections of $C_0$ and $C_1$. We perturb the configuration such that $s$ lies slightly below the $x$-axis and $p_1$ lies slightly above the horizontal line through $w_1$ (see Figure~\ref{fig:LBChew}). This implies that the first two edges of the path computed by Chew's algorithm are $[s p_1]$ and $[p_1 p_2]$. 

  Next, we add circles $C_i$ with west point $w_i$ such that $t$ lies on $C_i$, $p_i$ lies slightly above $w_i$, and point $p_{i+1}$ lies slightly above $p_i$. We place circles until $t$ is the lowest point of $C_j$ for some $j$. Finally, we add points $p_j, p_{j+1}, ..., p_k$ (for some integer $k$) on the $\arcP{j}{p_j t}$. We slightly perturb the configuration such that all chords reach $t$ (see Figure~\ref{fig:LBChew}). Observe that by placing sufficiently many vertices between $p_2$ and $p_j$, we create an almost vertical path from $p_2$ to $p_j$. The routing path computed by Chew's algorithm tends to $[s p_1] + [p_1 p_2] + \snailP{p_2,t}$. 
  
  We now pick $C_0$ to be the circle with center at $O_0 = (-0.7652277146, 0)$ and radius $0.2369448832$ and we pick $C_1$ to be the circle with center $O_1 = (0, -0.0320133045)$ and radius 1. This leads to a routing path whose length approaches $11.4660626$ as $j$ and $k$ approach infinity. Since the distance between $s$ and $t$ is $2.00166$, this implies that the routing ratio of Chew's algorithm is at least $5.7282$.\footnote{GeoGebra files that describe the 2 first examples presented in this section are available at the following url: \url{http://www.labri.fr/perso/bonichon/DelaunayRouting/}}
\end{proof}

We note that this lower bound is strictly larger that $|\snailP{s,t}|/|[st]| = 1+3\pi/2$. Next we show that no deterministic $k$-local routing algorithm on Delaunay triangulations can have routing ratio less than 1.7018. 

\begin{theorem}
\label{thm:L2Lower}
There exist no deterministic $k$-local routing algorithm on Delaunay triangulations with routing ratio at most 1.7018 or that is $1.2327$-competitive.
\end{theorem}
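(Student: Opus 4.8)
The plan is to prove both bounds with a single adversarial construction built around two empty circles (as the referenced figure suggests), using an indistinguishability argument against an arbitrary deterministic $k$-local algorithm $A$. First I would fix $A$ and $k$ and build a pair of Delaunay triangulations $G^{+}$ and $G^{-}$ that are mirror images of one another across the line $st$. The two point sets are chosen to coincide \emph{exactly} inside a region large enough to contain the $k$-hop neighbourhood of $s$ and of every vertex on the common initial portion of the route; the only differences between $G^{+}$ and $G^{-}$ lie beyond the $k$-hop horizon. Concretely, near $s$ the configuration is symmetric about $st$ and presents $A$ with a fork: a chain of vertices hugging the upper arc of the construction and a mirror chain hugging the lower arc, so that reaching $t$ forces the route to travel around one side of the obstacle formed by the two circles. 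Since the near-$s$ part of each triangulation is self-symmetric, the $k$-local views of $s$ (and of the common prefix) are literally identical in $G^{+}$ and $G^{-}$.

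Because $A$ is deterministic and its forwarding decision at each vertex depends only on the $k$-local view, and because these views are identical in $G^{+}$ and $G^{-}$ until $A$ has travelled far enough to ``see'' past the symmetric part, $A$ makes exactly the same sequence of moves in both triangulations up to the vertex where it commits to the upper or the lower arc. Whichever side $A$ commits to, the adversary selects the triangulation in which that side is the long way around: there $A$ is funnelled along the far arc to $t$, whereas the shortest $s$--$t$ path in $G$ uses the opposite (near) arc. I would then bound the length of the forced path from below. Dividing this length by $|[st]|$ yields the routing-ratio bound of $1.7018$, and dividing it by the length of the near-arc shortest path yields the competitive bound of $1.2327$; the two constants are obtained by optimising the radii of the two circles and the placement of $s$ and $t$, which is why the competitive-optimal instance is a slightly different member of the same family than the routing-optimal one (the two maximise different ratios). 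To cover all $k$ at once, I would realise the symmetric corridor with $\Theta(k)$ vertices so the point of divergence always lies beyond $A$'s horizon, while letting the two chains approximate fixed circular arcs in the limit, so the forced ratio converges to the stated constants independently of $k$.

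The main obstacle I expect is geometric rather than combinatorial. The delicate part is verifying that the prescribed point sets actually realise the intended Delaunay triangulations: every claimed edge must satisfy the empty-circle property, and, crucially, no spurious ``shortcut'' Delaunay edge may appear that would let $A$ cross from the far arc to $t$ cheaply, or recover the near arc after committing. One must simultaneously argue that within the $k$-local view there really is no information distinguishing the two sides, so that the commitment is genuinely unavoidable for \emph{every} deterministic rule (including arbitrary tie-breaking). Once the family of configurations is fixed and shown to be valid and indistinguishable, the remaining work—pinning down the exact constants $1.7018$ and $1.2327$ by maximising the forced-path ratios—is a routine but careful calculus optimisation.
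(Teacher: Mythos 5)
Your proposal follows essentially the same route as the paper: two overlapping empty circles with $s$ and $t$ on their far sides, vertices placed densely along the arcs so that the $k$-hop horizon is geometrically tiny, and a reflection of the point set beyond the commitment line $qq'$ so that a deterministic $k$-local algorithm must make the same choice of arc in both instances and is therefore wrong in one of them; the two constants are then obtained from a single configuration by dividing the forced path length (arc from $s$ to $q$ plus the shortest path from $q$ to $t$) by $|[st]|$ for the routing ratio and by the shortest-path length for the competitive ratio. The one device you correctly flag as the main obstacle but do not supply --- preventing spurious Delaunay ``shortcut'' edges between the two circles --- is resolved in the paper by inserting two \emph{shield vertices} $A'$ and $B'$ just outside $C_0$ along $O_0A$ and $O_0B$ (with small gaps in the dense arcs around them), and the explicit centers and radii realizing $1.7018$ and $1.2327$ are fixed numerically rather than optimized separately for the two bounds.
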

\begin{proof}
  Let $C_0$ be the circle with center $O_0 = (1, 0)$ and radius $1$ and let $C_1$ be the circle with center $O_1=(1.4804533538, 0.2990071425)$ and radius $1.2285346394$. Let $s$ be the leftmost intersection of $C_0$ with the $x$-axis and let $t$ be the rightmost intersection of $C_1$ with the $x$-axis. Let points $A$ and $B$ be the intersections of $C_0$ and $C_1$ such that $A$ lies above the $x$-axis. Let $A'$ and $B'$ be the vertices on $O_0 A$ and $O_0 B$ outside $C_0$ such that $|[AA']| = |[BB']|= 0.0718725166$ (see Figure~\ref{fig:LBL2}). These points (referred to as \emph{shield vertices} by Bose~et~al.~\cite{bose2011almost}) will ensure that no shortcuts between points on $C_0$ and points on $C_1$ are created. 

\begin{figure}[htb!]
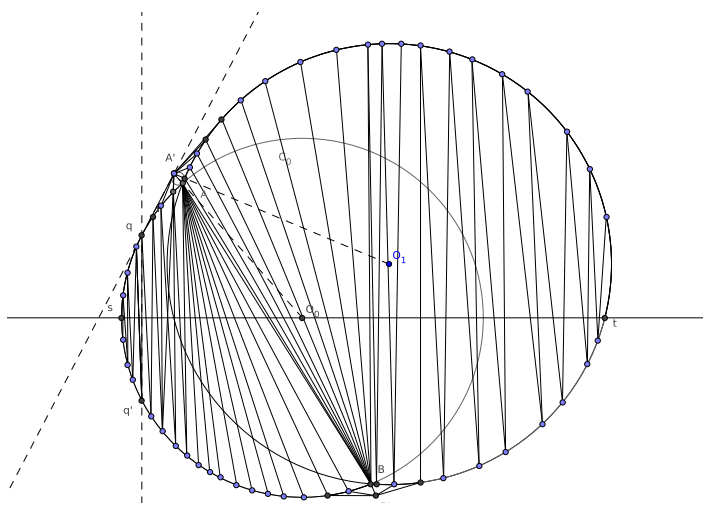

\center{\figLBLTwo}
\caption{One of the two point sets used to provide a lower bound on the routing ratio on Delaunay triangulations.}
\label{fig:LBL2}
\end{figure}

  Next, we place points densely on the arcs of $C_0$ and $C_1$ that are not contained in the other circle. To ensure that $A A'$ and $B B'$ are edges of the Delaunay triangulation, we leave small gaps along the arcs close to each shield vertex. Since all points of $C_0$ (respectively $C_1$) are co-circular, any planar triangulations of them is a valid Delaunay triangulation. Next, we perturb the points in order to both break co-circularity and to choose the triangulation of the interior of the circles (see Figure~\ref{fig:LBL2}). We compute a triangulation where the shortest paths between $t$ and any point of point set $P$ does not use any chord except $[A B]$. Let $q$ be the point on $C_0$ such that $q A'$ is the tangent of $C_0$ at $q$ and let $q'$ the reflection of $q$ over the $x$-axis.

  Now let us consider any deterministic $k$-local routing algorithm. We consider two point sets: The first one is described above (see Figure~\ref{fig:LBL2}) and the second one is obtained from the first one by reflecting the part of the point set that lies to the right of $q q'$ over the $x$-axis. No deterministic $k$-local routing algorithm can distinguish between the two instances before it reaches $q''$, a $k$-neighbor of $q$ or $q'$, depending on if the routing algorithm followed the arc towards $q$ or $q'$. Since vertices are densely placed on the arc of $C_0$, $q''$ is arbitrary close to $q$ or $q'$. Hence, any deterministic $k$-local algorithm must route to the same vertex $q''$ in both instances. Either $q''$ is close to $q$ or $q'$. In one of the two instances this leads to a non-optimal route: On the instance of Figure~\ref{fig:LBL2}, the length of the arc from $s$ to $q$ is $0.477998$ and the shortest paths from $q$ to $t$ go via vertex $A$ and have length $4.0693551467$. The Euclidean distance from $s$ to $t$ is $2.6720456033$. Hence, on one of the two instances the length of the computed path is at least $1.7018 \cdot |[st]|$. The shortest path between $s$ and $t$ in this configuration is of length $3.6888$, this configuration gives a lower bound on the competitiveness of any routing algorithm of $1.2327$.
\end{proof}

Finally, we look at the routing ratio and competitiveness of any deterministic $k$-local routing algorithm on the $L_1$- and $L_\infty$-Delaunay triangulations. 

\begin{theorem}
\label{thm:L1Lower}
There exists no deterministic $k$-local routing algorithm for the $L_1$- and $L_\infty$-Delaunay triangulations that has routing ratio less than $(2 + \sqrt 2/2)\approx 2.7071$ or that is $\frac{2 + \sqrt 2/2}{1+\sqrt 2}\approx 1.1213$-competitive.
\end{theorem}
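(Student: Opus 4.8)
The plan is to adapt the construction and the indistinguishability argument from the proof of Theorem~\ref{thm:L2Lower} to the $L_\infty$-Delaunay triangulation; the corresponding bound for the $L_1$-Delaunay triangulation then follows immediately, since the two triangulations are related by a rotation of $\pi/4$ combined with a scaling, and both routing and competitive ratios are invariant under similarities. First I would replace the two Euclidean circles $C_0$ and $C_1$ by two $L_\infty$-balls (axis-aligned squares) whose boundaries cross at two points, one above and one below the $x$-axis. As in Figure~\ref{fig:LBL2}, I would take $s$ to be the leftmost point of the first square on the $x$-axis and $t$ the rightmost point of the second square on the $x$-axis, and add \emph{shield vertices} just outside the first square (in the sense of Bose~et~al.~\cite{bose2011almost}) so that no shortcut edges are created between the two boundaries, forcing every short path from a boundary vertex to $t$ to pass through the upper crossing point, which I call $A$.

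Next I would populate the two relevant boundary arcs with a dense set of points lying on a common empty $L_\infty$-square, leaving small gaps near each shield vertex so that the shield edges survive, and then perturb the points both to break degeneracies and to fix the interior triangulation so that the unique short route from any boundary vertex to $t$ uses only the chord $[AB]$. Because all the boundary points lie on one empty square, any planar triangulation of them is a legal $L_\infty$-Delaunay triangulation, which supplies exactly the freedom needed to realize the intended combinatorial structure. I would then build a second instance by reflecting, across the $x$-axis, the portion of the point set to the right of the segment through the two splitting vertices $q$ and $q'$, where $q$ is the point whose shield edge is the $L_\infty$-tangent to the square and $q'$ is its reflection.

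The heart of the argument is the same symmetry trick as in Theorem~\ref{thm:L2Lower}: the two instances agree within every $k$-hop neighbourhood of the vertices the algorithm visits until it first reaches a $k$-neighbour $q''$ of the splitting vertex, so any deterministic $k$-local algorithm must route identically in both instances and therefore commits to one arc before it can tell them apart. Since the boundary is densely sampled, $q''$ can be made arbitrarily close to $q$ (or $q'$), so in at least one of the two mirror instances the algorithm ends up on the wrong side and must traverse an arc of length at least $(2+\sqrt2/2)\,|[st]|$ before reaching $t$ through $A$. The main computation, and the step I expect to be the chief obstacle, is verifying that in the $L_\infty$ geometry the relevant arc length from $s$ to $q$ together with the short path from $q$ to $t$ through $A$ equals exactly $(2+\sqrt2/2)\,|[st]|$: the arcs are now polygonal (sums of axis-parallel segments) rather than circular, and the tangency condition defining $q$ must be expressed in the $L_\infty$ metric, so the distances differ structurally from the Euclidean case.

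Finally, for the competitive lower bound I would compute the length of the true shortest $s$--$t$ path in the same configuration, which the construction arranges to be $(1+\sqrt2)\,|[st]|$; dividing the routing-ratio bound by this quantity yields the competitiveness bound $\frac{2+\sqrt2/2}{1+\sqrt2}\approx 1.1213$. The delicate point underlying the whole argument is to confirm that the perturbed, shielded point set is genuinely an $L_\infty$-Delaunay triangulation in which $[AB]$ is the only usable chord, so that the empty-square property simultaneously forbids every shortcut edge and admits the intended edges; only then are the computed distances the actual graph distances and the symmetry argument valid.
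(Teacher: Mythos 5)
Your overall strategy---two mirror-image instances, an indistinguishability argument forcing a deterministic $k$-local algorithm onto the wrong branch, and then division by $|[st]|$ (respectively by the shortest-path length) for the two claimed bounds---is the right skeleton and matches the spirit of the paper. The reduction between $L_1$ and $L_\infty$ by a rotation is also exactly what the paper does. The problem is that you leave a genuine gap at the precise point where the content of the theorem lies. You propose to obtain the configuration by transplanting the two-circle construction of Theorem~\ref{thm:L2Lower} into the $L_\infty$ metric (two crossing squares, shield vertices in the sense of Bose et al., a tangency point $q$), and you explicitly defer the ``chief obstacle'': verifying that the resulting polygonal arc lengths come out to exactly $(2+\sqrt{2}/2)\,|[st]|$ and that the shortest $s$--$t$ path comes out to $(1+\sqrt{2})\,|[st]|$. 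Nothing in the proposal supports these two numbers; they do not follow from the Euclidean computation, and without them the theorem's constants are asserted rather than proved. The competitive bound inherits the same gap, since it is obtained by dividing one unverified quantity by another.

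The paper's actual construction (Figure~\ref{fig:LBL1}) is structurally different and is where all the numerical work happens: there are no two crossing squares and no shield vertices. Instead, $s$ sits at a fork with branch points $q=(\epsilon,(2-\sqrt{2})/4)$ and $q'=(2\epsilon,-(2-\sqrt{2})/4)$ placed so close to $s$ that $k$ intermediate vertices hide everything to the right of $qq'$ from the algorithm at its very first step; beyond the fork lie two corridors, one through $A=(3\epsilon,1-d+\epsilon)$ and $B=(1+2\epsilon,1-d)$ and one through $q'$ and the parallelogram $q'BtB'$, and the free parameter $d$ is tuned to $(2-\sqrt{2})/4$ precisely so that the two alternative $q$-to-$t$ routes both have length $3-3(2-\sqrt{2})/4$. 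That balancing is what makes $|[sq]|$ plus the forced detour equal $2+\sqrt{2}/2$ while $|[st]|\to 1$, and the same coordinates give the shortest path length $1+\sqrt{2}$. To repair your proof you would need either to carry out an analogous explicit coordinate computation for your two-square configuration and show it realizes the same constants (which is doubtful, since the $L_\infty$ ``arcs'' and tangency conditions do not mirror the Euclidean ones), or to replace your configuration with one of this fork-and-corridor type. As written, your argument establishes only that \emph{some} lower bound exists, not the stated one.
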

\begin{proof}
  The proofs for the $L_1$- and $L_\infty$-Delaunay triangulations are very similar as one can be created from the other by rotating the point set. We present only the point set of the $L_\infty$-Delaunay triangulation. First, we place the source vertex $s$ at the origin. Given some values $\epsilon>0$ and $d<1$, we then place $k$ vertices close to point $q=(\epsilon, (2-\sqrt 2)/4)$ and $q'=(2\epsilon,-(2-\sqrt 2)/4)$ (see Figure~\ref{fig:LBL1}). Next, we place a vertex A at $(3\epsilon,1-d+\epsilon)$, vertex $B$ at $(1+2\epsilon,1-d)$, and destination $t$ at $(1+3\epsilon,0)$. Finally, we place vertices densely on $[Bt]$ and on $[q'B']$, where $B'$ is picked such that $q',B,t,B'$ forms a parallelogram. As $\epsilon$ approaches 0, the resulting $L_\infty$-Delaunay triangulation approaches the one shown in Figure~\ref{fig:LBL1}. 

\begin{figure}[htb!]
\center{\figLBLOne}
\caption{One of the two point sets used to provide a lower bound on the routing ratio on $L_\infty$-Delaunay triangulations.}
\label{fig:LBL1}
\end{figure}

  Now, consider any deterministic $k$-local routing algorithm. We consider two point sets: The first one is described above (see Figure~\ref{fig:LBL1}) and the second one is obtained from the first one by reflecting the part of the point set that lies to the right of $q q'$ over the $x$-axis. Since there are $k$ points between $s$ and $q$ and between $s$ and $q'$, the only information the $k$-local routing algorithm has before getting close to $q$ or $q'$ consists of the vertices to the left of $qq'$. If the first step made by the algorithm is towards a vertex close to $q$, we consider the point set shown in Figure~\ref{fig:LBL1}. Otherwise, we consider the reflected point set instead. We note that $|[sq]|= (2-\sqrt 2)/4$ and that the shortest paths from $q$ to $t$ have length $\min(|[qA]+[AB]+[Bt]|,|[qq']+[q'B]+[Bt]|)=\min(1-d-(2-\sqrt 2)/4+1+1-d,(2-\sqrt 2)/2+\sqrt 2 + 1-d)$. If we pick $d=(2-\sqrt 2)/4)$, the length of both paths is equal to $3-3(2-\sqrt 2)/4$. This leads to a path from $s$ to $t$ of length $2 + \sqrt 2/2$. Since the Euclidean distance between $s$ and $t$ approaches 1 as $\epsilon$ approaches 0, this gives a lower bound on the routing ratio of any deterministic $k$-local routing algorithm on the $L_1$- and $L_\infty$-Delaunay triangulations.

Finally, we observe that on the point set shown in Figure~\ref{fig:LBL1}, the length of shortest path from $s$ to $t$ is $1+\sqrt 2$. This gives a lower bound of $\frac{2 + \sqrt 2/2}{1+\sqrt 2}$ on the competitive ratio of any deterministic $k$-local routing algorithm.
\end{proof}

\bibliographystyle{abbrv}
\bibliography{bib}

\begin{thebibliography}{10}

\bibitem{DBLP:conf/wads/BarbaBCRV13}
L.~Barba, P.~Bose, J.~D. Carufel, A.~van Renssen, and S.~Verdonschot.
\newblock On the stretch factor of the {T}heta-4 graph.
\newblock In {\em Algorithms and Data Structures - 13th International
  Symposium, {WADS} 2013, London, ON, Canada, August 12-14, 2013. Proceedings},
  pages 109--120, 2013.

\bibitem{DBLP:conf/compgeom/BarbaBDFKORTVX14}
L.~Barba, P.~Bose, M.~Damian, R.~Fagerberg, W.~L. Keng, J.~O'Rourke, A.~van
  Renssen, P.~Taslakian, S.~Verdonschot, and G.~Xia.
\newblock New and improved spanning ratios for {Y}ao graphs.
\newblock In {\em 30th Annual Symposium on Computational Geometry, SOCG'14,
  Kyoto, Japan, June 08 - 11, 2014}, page~30, 2014.

\bibitem{BGHP12}
N.~Bonichon, C.~Gavoille, N.~Hanusse, and L.~Perkovi\'{c}.
\newblock The stretch factor of ${L}_1$- and ${L}_\infty$-{D}elaunay
  triangulations.
\newblock In {\em Proceedings of the 20th Annual European Symposium on
  Algorithms (ESA)}, volume 7501 of {\em Lecture Notes in Computer Science},
  pages 205--216. Springer, 2012.

\bibitem{DBLP:journals/ijcga/BoseBCDFLMM02}
P.~Bose, A.~Brodnik, S.~Carlsson, E.~D. Demaine, R.~Fleischer,
  A.~L{\'{o}}pez{-}Ortiz, P.~Morin, and J.~I. Munro.
\newblock Online routing in convex subdivisions.
\newblock {\em Int. J. Comput. Geometry Appl.}, 12(4):283--296, 2002.

\bibitem{DBLP:journals/jocg/BoseCCS10}
P.~Bose, P.~Carmi, S.~Collette, and M.~H.~M. Smid.
\newblock On the stretch factor of convex {D}elaunay graphs.
\newblock {\em JoCG}, 1(1):41--56, 2010.

\bibitem{DBLP:journals/comgeo/BoseCCSX11}
P.~Bose, P.~Carmi, M.~Couture, M.~H.~M. Smid, and D.~Xu.
\newblock On a family of strong geometric spanners that admit local routing
  strategies.
\newblock {\em Comput. Geom.}, 44(6-7):319--328, 2011.

\bibitem{DBLP:conf/cccg/BoseCMRV12}
P.~Bose, J.~D. Carufel, P.~Morin, A.~van Renssen, and S.~Verdonschot.
\newblock Optimal bounds on {T}heta-graphs: More is not always better.
\newblock In {\em Proceedings of the 24th Canadian Conference on Computational
  Geometry, {CCCG} 2012, Charlottetown, Prince Edward Island, Canada, August
  8-10, 2012}, pages 291--296, 2012.

\bibitem{DBLP:journals/ijcga/BoseDDOSSW12}
P.~Bose, M.~Damian, K.~Dou{\"{\i}}eb, J.~O'Rourke, B.~Seamone, M.~H.~M. Smid,
  and S.~Wuhrer.
\newblock {\(\pi\)}/2-angle {Y}ao graphs are spanners.
\newblock {\em Int. J. Comput. Geometry Appl.}, 22(1):61--82, 2012.

\bibitem{BDDT14}
P.~Bose, J.-L. De~Carufel, S.~Durocher, and P.~Taslakian.
\newblock Competitive online routing on {D}elaunay triangulations.
\newblock In {\em 14th Scandinavian Symposium and Workshops on Algorithm Theory
  (SWAT)}, 2014.

\bibitem{bose2011almost}
P.~Bose, L.~Devroye, M.~L{\"o}ffler, J.~Snoeyink, and V.~Verma.
\newblock Almost all {D}elaunay triangulations have stretch factor greater than
  $\pi$/2.
\newblock {\em Computational Geometry}, 44(2):121--127, 2011.

\bibitem{BFRV2012Routing}
P.~Bose, R.~Fagerberg, A.~van Renssen, and S.~Verdonschot.
\newblock Competitive routing in the half-$\theta_6$-graph.
\newblock In {\em Proceedings of the 23rd ACM-SIAM Symposium on Discrete
  Algorithms (SODA 2012)}, pages 1319--1328, 2012.

\bibitem{DBLP:journals/siamcomp/BoseM04}
P.~Bose and P.~Morin.
\newblock Online routing in triangulations.
\newblock {\em SIAM J.\ Comp.}, 33(4):937--951, 2004.

\bibitem{DBLP:journals/comgeo/BoseMRV15}
P.~Bose, P.~Morin, A.~van Renssen, and S.~Verdonschot.
\newblock The $\theta_5$-graph is a spanner.
\newblock {\em Comput. Geom.}, 48(2):108--119, 2015.

\bibitem{DBLP:conf/wads/BoseRV13}
P.~Bose, A.~van Renssen, and S.~Verdonschot.
\newblock On the spanning ratio of {T}heta-graphs.
\newblock In {\em Algorithms and Data Structures - 13th International
  Symposium, {WADS} 2013, London, ON, Canada, August 12-14, 2013. Proceedings},
  pages 182--194, 2013.

\bibitem{Che86}
L.~P. Chew.
\newblock There is a planar graph almost as good as the complete graph.
\newblock In {\em Proceedings of the Second Annual Symposium on Computational
  Geometry (SoCG)}, pages 169--177, 1986.

\bibitem{Che89}
L.~P. Chew.
\newblock There are planar graphs almost as good as the complete graph.
\newblock {\em Journal of Computer and System Sciences}, 39(2):205--219, 1989.

\bibitem{dijkstra1959note}
E.~W. Dijkstra.
\newblock A note on two problems in connexion with graphs.
\newblock {\em Numer. Math.}, 1:269--271, 1959.

\bibitem{DFS90}
D.~P. Dobkin, S.~J. Friedman, and K.~J. Supowit.
\newblock Delaunay graphs are almost as good as complete graphs.
\newblock {\em Discrete \& Computational Geometry}, 5(4):399--407, Dec. 1990.

\bibitem{DBLP:journals/winet/DurocherKN10}
S.~Durocher, D.~G. Kirkpatrick, and L.~Narayanan.
\newblock On routing with guaranteed delivery in three-dimensional ad hoc
  wireless networks.
\newblock {\em Wireless Networks}, 16(1):227--235, 2010.

\bibitem{Xia13}
G.~Xia.
\newblock The stretch factor of the {D}elaunay triangulation is less than
  1.998.
\newblock {\em SIAM J. Comput.}, 42(4):1620--1659, 2013.

\bibitem{xia2011toward}
G.~Xia and L.~Zhang.
\newblock Toward the tight bound of the stretch factor of {D}elaunay
  triangulations.
\newblock In {\em CCCG}, 2011.

\end{thebibliography}

\newpage
\appendix
\section*{Appendix}

\begin{figure}[hbt!]
  \center{\figangles}
  \caption{Illustration of Lemmas~\ref{lem:angles} and~\ref{lem:alpha}.}
  \label{fig:angles}
\end{figure}

\begin{lemma}
\label{lem:angles}
Let $O_i$ be the center of $C_i$ and let $\angle w_{i-1}O_{i-1}p_i$
and $\angle w_iO_ip_i$ be the angles defined using orientations of
$\dR{p_{i-1},p_{i}}$ and $\dR{p_i,p_{i+1}}$, respectively. Then (as illustrated
in Figure~\ref{fig:angles}), for every $1\leq i \leq k$,
$$0 \leq \angle w_iO_ip_i < \angle w_{i-1}O_{i-1}p_i < 3\pi/2.$$
\end{lemma}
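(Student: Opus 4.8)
The plan is to prove the two outer bounds directly from the arc structure of the routing step, and to prove the middle inequality $\angle w_iO_ip_i < \angle w_{i-1}O_{i-1}p_i$ by decomposing the passage from $C_{i-1}$ to $C_i$ into the fan of Delaunay triangles incident to $p_i$ and showing the central angle strictly decreases at each step of the fan.

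\emph{Outer bounds.} First I would record that, by the forwarding rule, the oriented arc $\dR{p_i,p_{i+1}}$ leaves $w_i$ along the upper arc (clockwise) when $p_i$ lies on the upper arc of $C_i$ and along the lower arc (counterclockwise) otherwise, so $p_i$ is reached from $w_i$ by travelling in exactly the orientation used to define $\angle w_iO_ip_i$. Hence $\angle w_iO_ip_i \ge 0$, with equality precisely when $p_i=w_i$. For the upper bound, the arc from $w_i$ to $p_i$ is contained in the arc from $w_i$ to $r_i$ (the whole upper, resp. lower, arc), and $p_i\neq r_i$ because $p_i$ lies strictly off $[st]$ while $r_i\in[st]$. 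Writing $O_i=(a,b)$ and $r_i=(a+\sqrt{\rho^2-b^2},0)$, the vector $r_i-O_i$ has nonnegative $x$-coordinate, so it makes a central angle in $(-\pi/2,\pi/2)$ with the east direction; consequently the arc from $w_i$ (at central angle $\pi$) to $r_i$ subtends an angle of at most $3\pi/2$. Since $p_i$ is a strictly interior point of this arc, $\angle w_iO_ip_i<3\pi/2$.

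\emph{Middle inequality via the fan.} Because $p_i\neq s,t$, it is a common vertex of at least two triangles crossing $[st]$, so $T_{i-1}\neq T_i$. Let $T_{i-1}=F_0,F_1,\dots,F_m=T_i$ be the triangles incident to $p_i$ and crossing $[st]$, listed in their circular order about $p_i$ from the arriving edge $[p_{i-1}p_i]$ to the departing edge $[p_ip_{i+1}]$, so that consecutive triangles $F_{j},F_{j+1}$ share an edge $[p_iq_{j}]$; let $O^{(j)}$ and $w^{(j)}$ denote the centre and west point of the circumcircle of $F_j$, with $O^{(0)}=O_{i-1}$ and $O^{(m)}=O_i$. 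The single-step claim is: since both circumcircles of $F_j$ and $F_{j+1}$ pass through $p_i$ and $q_j$, their centres lie on the perpendicular bisector $m_j$ of $[p_iq_j]$, and as the centre slides along $m_j$ the direction of the radius $p_i-O$ rotates monotonically. The Delaunay empty-circle property for the legal edge $[p_iq_j]$ (the apex of $F_{j+1}$ is outside the circumcircle of $F_j$, and vice versa) fixes the order of $O^{(j)},O^{(j+1)}$ on $m_j$, hence the sign of this rotation. Translating the rotation of $p_i-O$ into $\angle wOp_i$ — which equals $\pi$ minus the radius direction in the clockwise regime and $\pi$ plus it in the counterclockwise regime — shows the step strictly decreases $\angle wOp_i$. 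Chaining the $m\ge 1$ strict decreases yields $\angle w^{(m)}O^{(m)}p_i<\angle w^{(0)}O^{(0)}p_i$, which is the desired inequality.

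\emph{Main obstacle.} The delicate part is the orientation bookkeeping along the fan. The angle is measured clockwise when $p_i$ lies on the upper arc of the current circle and counterclockwise otherwise; as the circle varies continuously through the fan the regime can change only at the instant $p_i$ momentarily becomes the west point, where the angle passes through $0$. One must therefore verify that the routing fan from $C_{i-1}$ to $C_i$ stays within a single regime — equivalently, that the arriving orientation on $C_{i-1}$ and the departing orientation on $C_i$ are precisely the ones induced by the sweep — so that the per-step monotonicity of $p_i-O$ translates into a genuine decrease of $\angle wOp_i$ rather than the V-shaped behaviour it would exhibit across a regime change. Establishing this consistency, together with pinning down the direction of the Delaunay-forced center shift in the single-step argument, is the crux of the proof; the outer bounds are routine by comparison.
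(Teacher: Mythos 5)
Your strategy coincides with the paper's: the outer bounds come from the position of $p_i$ on the oriented arc from the west point to the rightmost intersection $r$ with $st$ (an arc of central angle at most $3\pi/2$), and the middle inequality is obtained by interpolating through the fan $C_{i-1}=C_i^0,C_i^1,\dots,C_i^l=C_i$ of circumcircles of the triangles incident to $p_i$ that cross $[st]$, using that consecutive centers lie on the perpendicular bisector of the shared edge. Your outer-bound computation is correct; just note that the bound actually needed in the chain is $\angle w_{i-1}O_{i-1}p_i<3\pi/2$, so it must be applied to $C_{i-1}$ and the point $p_i$ on it (the argument transfers verbatim).

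The genuine gap is that you stop exactly at the decisive step. You assert that the empty-circle property ``fixes the order of $O^{(j)},O^{(j+1)}$ on $m_j$, hence the sign of this rotation,'' but then concede that pinning down that sign, together with the orientation bookkeeping, is ``the crux.'' Without the sign you cannot tell whether the central angle decreases or increases across a step of the fan, and that is the entire content of the middle inequality; everything else in your write-up is setup. The paper closes this with one concrete observation: after the normalization that places $p_i$ above $st$ (so all angles in the fan are measured clockwise), every triangle of the fan intersects $[st]$, hence each shared edge $[p_iq_j]$ has its second endpoint $q_j$ strictly below the $x$-axis; combined with the left-to-right order of the fan, this forces $O_i^j$ to lie on the perpendicular bisector of $[p_iq_j]$ in the half-plane bounded by the line $p_iO_i^{j-1}$ that does not contain $w_i^{j-1}$, and that placement immediately yields $\angle p_iO_i^jw_i^j<\angle p_iO_i^{j-1}w_i^{j-1}$. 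That half-plane statement is precisely what you would need to prove to complete your argument; once it is in place, all intermediate angles are compared in the single clockwise orientation fixed by the normalization, and the regime-change scenario you raise does not arise.
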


\begin{proof}
Without loss of generality, we assume that $p_{i-1}$ and $p_i$ belong
to the upper arc from
$w_{i-1}$ to $r_{i-1}$ of $C_{i-1}$. This implies that the orientation of
$\dR{p_{i-1},p_{i}}$ is clockwise and that $p_i$ lies above $st$.
Because $r_{i-1}$ is the rightmost intersection of $C_{i-1}$ with $st$,
it follows that $\angle w_{i-1}O_{i-1}r_{i-1} < 3\pi/2$, which implies the
third inequality. Also, the first inequality holds because the angles are
defined to be non-negative.

Consider the triangles which intersect $[st]$ that have $p_i$ as a vertex.
Let $C_{i-1}=C_i^0, C_i^1, \dots, C_i^l=C_i$ be the circles circumscribing these triangles,
ordered from left to right.
Let $O_i^j$ be the center and $w_i^j$ be the leftmost
point of circle $C_i^j$. The lemma will follow if we show that
$\angle w_i^jO_i^jp_i < \angle w_i^{j-1}O_i^{j-1}p_i$ for every
$j = 1, \dots, l$. Let $p_i$, $q_j$ and $q'_{j}$ be the vertices of the triangle
circumscribed by $C_i^{j-1}$, in clockwise
order. Then $q_j$ must lie below the $x$-axis and $O_i^j$ must lie on 
the perpendicular bisector of segment $[p_iq_j]$ and in the half-plane
defined by $p_iO_i^{j-1}$ not including $w_i^{j-1}$.
This implies that $\angle p_iO_i^{j-1}w_i^{j-1} > \angle p_iO_i^jw_i^j$.
\end{proof}

\begin{proofof}{\bf Lemma~\ref{lem:alpha}}
This lemma is illustrated in Figure~\ref{fig:angles}.
Consider the circles $C_{i-1}$ and $C_i$ with centers $O_{i-1}$ and $O_i$, respectively. Without
loss of generality,
we assume that  $p_{i-1}$ and $p_i$ belong to the upper arc from $w_{i-1}$ to
$r_{i-1}$ of $C_{i-1}$. This implies that the orientation of $\dR{p_{i-1},p_{i}}$
is clockwise and that $y(p_i) \geq 0$.
The center $O'_{i-1}$ of $\Cpi{i-1}$ lies on the half-line defined
by the perpendicular bisector of $[p_{i-1}p_{i}]$, starting at
$O_{i-1}$ and intersecting $[p_{i-1}p_{i}]$. This implies that
a) $\angle w'_{i-1}O'_{i-1}p_{i} > \angle w_{i-1}O_{i-1}p_{i}$ and
b) $\angle w'_{i-1}O'_{i-1}p_{i-1} < \angle w_{i-1}O_{i-1}p_{i-1}$. 
In the context of circle $C_i$, inequality b) becomes 
$\angle w'_{i}O'_{i}p_{i} < \angle w_{i}O_{i}p_{i}$ and the second inequality
in (\ref{eq:anglesp}) follows from Lemma~\ref{lem:angles}. 
Let $r$ be the rightmost intersection of $\Cpi{i-1}$ with $st$.
It follows that $\angle w'_{i-1}O'_{i-1}r \leq 3\pi/2$. Since $p_i$ lies on
$\arcP{i-1}{w'_{i-1},r}$, the third inequality in (\ref{eq:anglesp}) holds.
Finally, the
first inequality in (\ref{eq:anglesp}) holds since either $p_i=w'_i$ or
$p_i$ lies on $\arcP{i}{w'_i,p_{i+1}}$ that is clockwise oriented.
\end{proofof}

\end{document}